\definecolor[named]{rpcolor}{cmyk}{0.80,0,1,0.19} 
\definecolor[named]{gbcolor}{cmyk}{1,0.1,0,0.2} 
\definecolor[named]{asidecolor}{gray}{0.7} 
\newcommand{\Syn}{\mathbf{Syn}}
\newcommand{\Sem}{\mathbf{Sem}}
\newcommand{\IH}{\mathsf{IH}}
\newcommand{\IHaff}{\IH^+}
\newcommand{\IHpoly}{\IH^+_{\geq}}
\newcommand{\IHpl}{\IH_{PL}}
\newcommand{\eqE}[1]{\mathrel{\overset{\makebox[0pt]{\mbox{\normalfont\tiny #1}}}{=}}}
\newcommand{\leqE}[1]{\mathrel{\overset{\makebox[0pt]{\mbox{\normalfont\tiny #1}}}{\subseteq}}}
\newcommand{\polyleq}{\ \leqE{$\IHpoly$}\ }
\newcommand{\pleq}{\ \eqE{$\IHpl$}\ }
\newcommand{\plleq}{\ \leqE{$\IHpl$}\ }
\newcommand{\Hyperplanes}{\mathsf{Hyperplanes}}
\newcommand{\Interior}{\mathsf{Int}}
\newcommand{\freeProp}{\mathsf{P}}
\newcommand{\freeUProp}{\mathsf{P}_\cup}
\tikzset{baseline=-0.5ex}
\definecolor{light-gray}{gray}{.7}
\tikzstyle{none}=[inner sep=0pt]
\tikzstyle{plain}=[inner sep=0pt]
\tikzstyle{black}=[circle, draw=black, fill=black, inner sep=0pt, minimum size=3.5pt]
\tikzstyle{black-faded}=[circle, draw=light-gray, fill=light-gray, inner sep=0pt, minimum size=4pt]
\tikzstyle{white}=[circle, draw=black, fill=white, inner sep=0pt, minimum size=3.5pt]
\tikzstyle{white-faded}=[circle, draw=light-gray, fill=white, inner sep=0pt, minimum size=4.5pt]
\tikzstyle{delay}=[fill=black, regular polygon, regular polygon sides=3,rotate=-90, scale=.55]
\tikzstyle{delay-op}=[fill=black, regular polygon, regular polygon sides=3,rotate=90, scale=.55]
\tikzstyle{reg}=[draw, fill=white, rounded rectangle, rounded rectangle left arc=none, minimum height=1em, minimum width=1em, node font={\scriptsize}]
\tikzstyle{coreg}=[draw, fill=white, rounded rectangle, rounded rectangle right arc=none, minimum height=1em, minimum width=1em, node font={\scriptsize}]
\tikzstyle{rn}=[circle, draw=red, fill=red, inner sep=0pt, minimum size=4pt]
\tikzstyle{place}=[circle, draw=black, fill=white, inner sep=0pt, minimum size=8pt]
\tikzstyle{medium box}=[fill=white, draw=black, shape=rectangle, minimum height=1cm, minimum width=0.75cm]
\tikzstyle{small box}=[fill=white, draw=black, shape=rectangle, minimum height=0.75cm, minimum width=0.5cm]
\tikzstyle{square}=[fill=white, draw=black, shape=rectangle, minimum height=1cm, minimum width=1cm]
\tikzstyle{triangle}=[fill=black, draw=black, shape=regular polygon, regular polygon sides=3, rotate=270, scale=0.6]
\tikzstyle{antipode}=[fill=black, draw=black, shape=rectangle]
\tikzstyle{triangleop}=[fill=black, draw=black, shape=regular polygon, regular polygon sides=3, rotate=90, scale=0.6]
\tikzstyle{label}=[fill=none, draw=none, shape=circle]
\tikzstyle{transition}=[fill=white, draw=black, shape=rectangle, minimum height=0.75cm, minimum width=0.75cm]
\tikzstyle{transition}=[rectangle,thick,draw=black!75,fill=black!20,minimum size=7pt]
\tikzstyle{place}=[circle, draw=black, fill=white, inner sep=0pt, minimum size=8pt]
\tikzstyle{arrow}=[->]
\newcommand{\One}{
\tikzset{x=1em, y=2.1ex}
\begin{tikzpicture}[baseline=-.5ex]
	\begin{pgfonlayer}{nodelayer}
		\node [style=none] (0) at (-0.75, -0) {};
		\node [style=none] (1) at (0.25, -0) {};
		\node [style=none] (2) at (-0.75, 0.25) {};
		\node [style=none] (3) at (-0.75, -0.25) {};
	\end{pgfonlayer}
	\begin{pgfonlayer}{edgelayer}
		\draw (0.center) to (1.center);
		\draw (3.center) to (2.center);
	\end{pgfonlayer}
\end{tikzpicture}}
\tikzset{x=1em, y=1.5ex}
\tikzset{x=1em, y=1.5ex, baseline=-0.5ex}
\tikzset{ihbase/.style={inner sep=0,circle,draw,fill=lightgray,minimum size=0.4em,node contents={}}}
\tikzset{ihblack/.style={ihbase,fill=black}}
\tikzset{ihwhite/.style={ihbase,fill=white}}
\tikzset{mat/.style={draw,fill=white,rectangle,node font=\scriptsize}}
\tikzset{ha/.style={mat,rounded rectangle,rounded rectangle left arc=none}}
\tikzset{haop/.style={mat,rounded rectangle,rounded rectangle right arc=none}}
\tikzset{blackha/.style={mat,rounded rectangle,rounded rectangle left arc=none,font=\color{white},fill=black}}
\tikzset{blackhaop/.style={mat,rounded rectangle,rounded rectangle right arc=none,font=\color{white},fill=black}}
\tikzset{anti/.style={inner sep=0,isosceles triangle,fill=black,draw=black, minimum width=0.75em, node contents={}}}
\tikzset{antiop/.style={anti,shape border rotate=180}}
\tikzset{antisq/.style={inner sep=0,rectangle,fill=black, minimum height=1em, minimum width=0.6em, node contents={}}}
\tikzset{count/.style={above,inner ysep=0.15em,font=\scriptsize}}
\tikzset{axiom/.style={above,font=\small}}
\tikzset{dir/.style={-Latex}}
\tikzset{st/.style={decoration={markings,
    mark={at position 0.5 with {\draw (0, 2pt) to (0, -2pt);}}},
    postaction=decorate}}
\newcommand{\sem}[1]{\left\llbracket{\,#1\,}\right\rrbracket}
\newcommand{\genericcomult}[2]{
  \begin{tikzpicture}
    \node at (1, 0) [ihbase,solid,name=copy,#1];
    \draw[#2] (copy) .. controls (1.25, 0.75) .. (2, 0.75);
    \draw[#2] (0, 0) -- (copy);
    \draw[#2] (copy) .. controls (1.25, -0.75) .. (2, -0.75);
  \end{tikzpicture}
}
\newcommand{\genericcounit}[2]{
  \tikz \draw[#2] (0, 0) -- (1, 0) node[ihbase,#1, solid];
}
\newcommand{\genericcounitn}[2]{
  \tikz \draw (0, 0) -- node[count] {#2} (1, 0) node[ihbase,#1];
}
\newcommand{\genericmult}[2]{
  \tikz {
    \node at (1,0) (copy) [ihbase,#1,solid];
    \draw[#2] (0,  0.75) .. controls (0.75,  0.75) .. (copy);
    \draw[#2] (0, -0.75) .. controls (0.75, -0.75) .. (copy);
    \draw[#2] (copy) -- (2, 0);
  }
}
\newcommand{\genericunit}[2]{
  \tikz \draw[#2] (0, 0) node[ihbase,#1, solid] -- (1, 0);
}
\newcommand{\Bcomult}{\genericcomult{ihblack}{}}
\newcommand{\Bcounit}{\genericcounit{ihblack}{}}
\newcommand{\Bmult}{\genericmult{ihblack}{}}
\newcommand{\Bunit}{\genericunit{ihblack}{}}
\newcommand{\Wmult}{\genericmult{ihwhite}{}}
\newcommand{\Wunit}{\genericunit{ihwhite}{}}
\newcommand{\Wcomult}{\genericcomult{ihwhite}{}}
\newcommand{\Wcounit}{\genericcounit{ihwhite}{}}
\newcommand{\Wcounitn}[1]{\genericcounitn{ihwhite}}
\newcommand\idzero{
\tikzset{x=1em, y=2.1ex}
\InputIfFileExists{empty-diag.tikz}{}{\input{./tikz/empty-diag.tikz}}
\tikzset{x=1em, y=1.5ex}
} 
\newcommand{\idone}{
  \tikz \draw (0, 0) -- (1, 0);
}
\newcommand{\sym}{
  \tikz {
    \draw (0,  0.5) .. controls (0.5,  0.5) and (0.5, -0.5) .. (1, -0.5);
    \draw (0, -0.5) .. controls (0.5, -0.5) and (0.5,  0.5) .. (1,  0.5);
  }
}
\newcommand\scalar[1]{
  \tikz {
    \node[ha] (ha) {$#1$};
    \draw (ha.west) -- ++(-0.75, 0);
    \draw (ha.east) -- ++(0.75, 0);
  }
}
\newcommand\coscalar[1]{
  \tikz {
    \node[haop] (haop) {$#1$};
    \draw (haop.west) -- ++(-0.75, 0);
    \draw (haop.east) -- ++(0.75, 0);
  }
}
\tikzset{x=1em, y=1.5ex}
\tikzset{x=1em, y=1.5ex}
\definecolor{light-gray}{gray}{.5}
\tikzstyle{none}=[inner sep=0pt]
\tikzstyle{plain}=[inner sep=0pt]
\tikzstyle{black}=[circle, draw=black, fill=black, inner sep=0pt, minimum size=4pt]
\tikzstyle{black-faded}=[circle, draw=light-gray, fill=light-gray, inner sep=0pt, minimum size=4pt]
\tikzstyle{white}=[circle, draw=black, fill=white, inner sep=1pt, minimum size=4.5pt, font={\scriptsize}]
\tikzstyle{white-faded}=[circle, draw=light-gray, fill=white, inner sep=0pt, minimum size=4.5pt]
\tikzstyle{delay}=[fill=black, regular polygon, regular polygon sides=3,rotate=-90, scale=.55]
\tikzstyle{delay-op}=[fill=black, regular polygon, regular polygon sides=3,rotate=90, scale=.55]
\tikzstyle{reg}=[draw, fill=white, rounded rectangle, rounded rectangle left arc=none, minimum height=1.2em, minimum width=1.4em, inner sep=1pt, node font={\scriptsize}]
\tikzstyle{coreg}=[draw, fill=white, rounded rectangle, rounded rectangle right arc=none, minimum height=1.2em, minimum width=1.4em, node font={\scriptsize}]
\tikzstyle{basic rounded box}=[draw, rectangle, rounded corners, minimum height=1.2em, minimum width=1.4em]
\tikzstyle{small rounded box}=[draw, rectangle, rounded corners, minimum height=1.2em, minimum width=1.4em, node font={\scriptsize}]
\tikzstyle{rcoreg}=[draw=red, fill=white, rounded rectangle, rounded rectangle right arc=none, minimum height=1.2em, minimum width=1.4em, node font={\scriptsize}]
\tikzstyle{regb}=[draw, fill=black, rounded rectangle, rounded rectangle left arc=none, minimum height=1.2em, minimum width=1.4em, node font={\scriptsize}]
\tikzstyle{regbw}=[draw, left color=black, right color=white, middle color=white, rounded rectangle, rounded rectangle left arc=none, minimum height=1.2em, minimum width=1.4em, node font={\scriptsize}]
\tikzstyle{regwb}=[draw, left color=white, right color=black, middle color=white, rounded rectangle, rounded rectangle left arc=none, minimum height=1.2em, minimum width=1.4em, node font={\scriptsize}]
\tikzstyle{coregb}=[draw, fill=black, rounded rectangle, rounded rectangle right arc=none, minimum height=1.2em, minimum width=1.4em, node font={\scriptsize}]
\tikzstyle{coregbw}=[draw, left color=black, right color=white, middle color=white, rounded rectangle, rounded rectangle right arc=none, minimum height=1.2em, minimum width=1.4em, node font={\scriptsize}]
\tikzstyle{coregwb}=[draw, left color=white, right color=black, middle color=white, rounded rectangle, rounded rectangle right arc=none, minimum height=1.2em, minimum width=1.4em, node font={\scriptsize}]
\tikzstyle{rn}=[circle, draw=red, fill=red, inner sep=0pt, minimum size=4pt]
\tikzstyle{wrn}=[circle, draw=red, fill=white, inner sep=0pt, minimum size=4pt]
\tikzstyle{place}=[circle, draw=black, fill=white, inner sep=1pt, minimum size=9pt, node font={\scriptsize}]
\tikzstyle{act}=[circle, draw=black, fill=white, inner sep=0pt, minimum size=4.5pt]
\tikzstyle{coact}=[draw, fill=white, rounded rectangle, rounded rectangle right arc=none, minimum height=.7em, minimum width=.9em, node font={\scriptsize}]
\tikzstyle{triangle right}=[regular polygon, regular polygon sides=3, draw, fill=white, text width=1pt, inner sep=1pt, outer sep=0mm, shape border rotate=-90]
\tikzstyle{triangle left}=[regular polygon, regular polygon sides=3, draw, fill=white, text width=1pt, inner sep=1pt, outer sep=0mm, shape border rotate=90]
\tikzstyle{leq}=[draw, rounded rectangle, rounded rectangle right arc=none, minimum height=1em, minimum width=1em, scale=0.75, inner sep=3pt, font={$\leq$}]
\tikzstyle{geq}=[draw, rounded rectangle, rounded rectangle left arc=none, minimum height=1em, minimum width=1em, scale=0.75, inner sep=3pt, font={$\geq$}]
\tikzstyle{basic rounded box}=[draw, fill=white, rectangle, rounded corners, minimum height=1.2em, minimum width=1.4em]
\tikzstyle{small rounded box}=[draw, fill=white, rectangle, rounded corners, minimum height=1.2em, minimum width=1.4em, node font={\scriptsize}]
\tikzstyle{union}=[draw, fill=white, rounded rectangle, rounded rectangle right arc=none, minimum height=1.2em, minimum width=1.4em, node font={\scriptsize}]
\tikzstyle{basic box}=[draw, fill=white, rectangle, minimum height=1.2em, minimum width=1.4em]
\tikzset{
BWmatrix/.pic={
    \coordinate (center) at (0,0);
    \filldraw[fill=white, draw=black, line width=1pt] (.5,0) 
        [rounded corners=14pt] -- (1,0) 
        [rounded corners=14pt] -- (1,1)
        [rounded corners=0pt] -- (.5,1) 
        [rounded corners=0pt] -- cycle;
    \filldraw[fill=black, draw=black, line width=1pt] (0,0) 
        -- (.5,0) 
        -- (.5,1)
        -- (0,1) 
        -- cycle;
   },
pics/BWmatrix/.default=0.2
}
\tikzstyle{pl}=[circle,thick,draw=black!75,fill=white,minimum size=17pt]
\tikzstyle{port}=[circle, fill,inner sep=1.2pt]
\tikzstyle{transition}=[rectangle,thick,draw=black!75,
\tikzstyle{arrow}=[->]
\newcommand{\diagbox}[3]{
\tikzset{x=1em, y=2.1ex}
\begin{tikzpicture}
	\begin{pgfonlayer}{nodelayer}
		\node [style=none] (0) at (-0.75, 0.5) {};
		\node [style=none] (1) at (-0.25, 1) {};
		\node [style=none] (2) at (-0.75, -0.5) {};
		\node [style=none] (3) at (0.75, -0.5) {};
		\node [style=none] (4) at (-0.25, -1) {};
		\node [style=none] (5) at (0.75, 0.5) {};
		\node [style=none] (6) at (2.5, -0) {};
		\node [style=none] (7) at (0.75, -0) {};
		\node [style=none] (8) at (0.25, -1) {};
		\node [style=none] (9) at (0.25, 1) {};
		\node [style=none] (10) at (0, -0) {$#1$};
		\node [style=none] (11) at (2.25, 0.5) {\scriptsize $#3$};
		\node [style=none] (12) at (-2.25, 0.5) {\scriptsize $#2$};
		\node [style=none] (13) at (-2.5, -0) {};
		\node [style=none] (14) at (-0.75, -0) {};
	\end{pgfonlayer}
	\begin{pgfonlayer}{edgelayer}
		\draw [in=180, out=0, looseness=1.25] (7.center) to (6.center);
		\draw [semithick, in=0, out=-90, looseness=1.00] (3.center) to (8.center);
		\draw [semithick, in=-90, out=180, looseness=1.00] (4.center) to (2.center);
		\draw [semithick, in=180, out=90, looseness=1.00] (0.center) to (1.center);
		\draw [semithick, in=90, out=0, looseness=1.00] (9.center) to (5.center);
		\draw [semithick] (1.center) to (9.center);
		\draw [semithick] (5.center) to (3.center);
		\draw [semithick] (8.center) to (4.center);
		\draw [semithick] (2.center) to (0.center);
		\draw [in=180, out=0, looseness=1.25] (13.center) to (14.center);
	\end{pgfonlayer}
\end{tikzpicture}
\tikzset{x=1em, y=1.5ex}
}
\newcommand{\smallubox}[1]{
\tikzset{x=1em, y=2.1ex}
\begin{tikzpicture}
	\begin{pgfonlayer}{nodelayer}
		\node [style=none] (0) at (-0.5, 0.5) {};
		\node [style=none] (1) at (-0.25, 0.75) {};
		\node [style=none] (2) at (-0.5, -0.5) {};
		\node [style=none] (3) at (0.5, -0.5) {};
		\node [style=none] (4) at (-0.25, -0.75) {};
		\node [style=none] (5) at (0.5, 0.5) {};
		\node [style=none] (6) at (1.5, -0) {};
		\node [style=none] (7) at (0.5, -0) {};
		\node [style=none] (8) at (0.25, -0.75) {};
		\node [style=none] (9) at (0.25, 0.75) {};
		\node [style=none] (10) at (0, -0) {$#1$};
		\node [style=none] (11) at (-1.5, -0) {};
		\node [style=none] (12) at (-0.5, -0) {};
	\end{pgfonlayer}
	\begin{pgfonlayer}{edgelayer}
		\draw [in=180, out=0, looseness=1.25] (7.center) to (6.center);
		\draw [semithick, in=0, out=-90, looseness=1.00] (3.center) to (8.center);
		\draw [semithick, in=-90, out=180, looseness=1.00] (4.center) to (2.center);
		\draw [semithick, in=180, out=90, looseness=1.00] (0.center) to (1.center);
		\draw [semithick, in=90, out=0, looseness=1.00] (9.center) to (5.center);
		\draw [semithick] (1.center) to (9.center);
		\draw [semithick] (5.center) to (3.center);
		\draw [semithick] (8.center) to (4.center);
		\draw [semithick] (2.center) to (0.center);
		\draw [in=180, out=0, looseness=1.25] (11.center) to (12.center);
	\end{pgfonlayer}
\end{tikzpicture}
\tikzset{x=1em, y=1.5ex}
}
\newcommand{\myeq}[1]{\mathrel{\overset{\makebox[0pt]{\mbox{\tiny (#1)}}}{=}}}
\newcommand{\myleq}[1]{\mathrel{\overset{\makebox[0pt]{\mbox{\normalfont\tiny (#1)}}}{\subseteq}}}
\newcommand{\R}{\mathbb{R}}
\newcommand{\field}{\mathbb{K}} 
\newcommand{\Mat}[1]{\mathsf{Mat}_{#1}}
\newcommand{\AffRel}[1]{\mathsf{AffRel}_{#1}}
\newcommand{\from}{\mathrel{:}\,}
\newcommand{\poi}{\,;\,}
\newcommand{\adjto}{\,\lower1pt\hbox{$\dashv$}\,}
\newcommand{\LinRel}[1]{\mathsf{LinRel}_{#1}}
\newcommand{\RelX}[1]{\mathsf{Rel}_{#1}}
\newcommand{\PolyRel}[1]{\mathsf{PolyRel}_{#1}}
\newcommand{\df}{:=}
\newcommand{\diagregexp}[1]{
\begin{tikzpicture}
	\begin{pgfonlayer}{nodelayer}
		\node [style=none] (0) at (1.5, 0) {};
		\node [style=rcoreg] (1) at (0, 0) {{\color{red} $e$}};
	\end{pgfonlayer}
	\begin{pgfonlayer}{edgelayer}
		\draw [red] (1) to (0.center);
	\end{pgfonlayer}
\end{tikzpicture}}
\newcommand{\circleeffect}[1]{
\begin{tikzpicture}
	\begin{pgfonlayer}{nodelayer}
		\node [style=none] (3) at (-0.75, 0) {};
		\node [style=place] (5) at (0.75, 0) {$#1$};
	\end{pgfonlayer}
	\begin{pgfonlayer}{edgelayer}
		\draw (3.center) to (5);
	\end{pgfonlayer}
\end{tikzpicture}}
\tikzstyle{vsource}=[rmeter, t={\textsf{\tiny -- +}}] 
\tikzstyle{ammeter}=[rmeter, t={\textsf{A}}] 
\tikzstyle{vmeter}=[rmeter, t={\textsf{V}}] 
\tikzstyle{elecdot}=[circle,fill,inner sep=0.85pt]
\newcommand{\includegraffle}[1]{
  {\lower10pt\hbox{$\includegraphics[height=1cm]{tikz/#1.pdf}$}}
}
\newcommand{\eleccopy}{\tikz[color=blue]{
  \draw (1,1.5) -- (0,1.5);
  \draw (1,-1.5) -- (0,-1.5);
  \draw (0,1.5) -- (0,-1.5);
  \node[elecdot] at (0,0) {};
  \draw (0,0) -- (-1,0);
}}
\newcommand{\eleccocopy}{\tikz[color=blue]{
  \draw (-1,1.5) -- (0,1.5);
  \draw (-1,-1.5) -- (0,-1.5);
  \draw (0,1.5) -- (0,-1.5);
  \node[elecdot] at (0,0) {};
  \draw (0,0) -- (1,0);
}}
\newcommand{\elecunit}{\tikz[color=blue]{
  \node[elecdot] at (0,0) {};
  \draw (0,0) -- (1,0);
}}
\newcommand{\eleccounit}{\tikz[color=blue]{
  \node[elecdot] at (0,0) {};
  \draw (0,0) -- (-1,0);
}}
\begin{document}

\title{Graphical Piecewise-Linear Algebra}

\author{
    Guillaume Boisseau\inst{1}\orcidID{0000-0001-5244-893X}
    \and Robin Piedeleu\inst{2}
}

\institute{
    University of Oxford, Oxford, UK
    \email{guillaume.boisseau@cs.ox.ac.uk}
    \and
    University College London, London, UK
    \email{r.piedeleu@ucl.ac.uk}
}


\maketitle

\begin{abstract}
    Graphical (Linear) Algebra is a family of diagrammatic languages
    allowing to reason about different kinds
    of subsets of vector spaces compositionally.
    It has been used to model various application domains,
    from signal-flow graphs to Petri nets and electrical circuits.
    In this paper,
    we introduce to the family its most expressive member to date:
    Graphical Piecewise-Linear Algebra, a new language to specify piecewise-linear subsets of vector spaces.

    Like the previous members of the family,
    it comes with a complete axiomatisation,
    which means it can be used to reason about the corresponding semantic domain purely equationally,
    forgetting the set-theoretic interpretation.
    We show completeness using a single axiom on top of Graphical Polyhedral Algebra,
    and show that this extension is the smallest that can capture
    a variety of relevant constructs.

    Finally, we showcase its use by modelling the behaviour
    of stateless electronic circuits of ideal elements, a domain that had remained outside the remit of previous diagrammatic languages. 
    \keywords{string diagrams \and piecewise-linear \and prop \and axiomatisation}
\end{abstract}


\section{Introduction}

Functional thinking underpins most scientific models. Nature, however, does not distinguish inputs and outputs---physical systems are governed by laws that merely express \emph{relations} between their observable variables. While influential scientists, like the famous control theorist J. Willems, have pointed out the inadequacy of functional thinking~\cite{Willems2007}, it has remained the dominant paradigm in science and engineering. Arguably, our mathematical practice, especially the foundational emphasis on sets and functions, and the limitations of standard algebraic syntax, are partially to blame for the persistence of this status quo. But there are also alternative approaches, that take relations seriously as the primitive building blocks of our mathematical models. Category theory in particular is agnostic about what constitutes a morphism and can accommodate relations as easily as functions.

Relations, with their usual composition and the cartesian product of sets, form a monoidal category---a category in which morphisms can be composed in two different ways. As a result, they admit a natural two-dimensional syntax of \emph{string diagrams}. This notation has several advantages when it comes to reasoning about open and interconnected systems~\cite{baez2018props}: string diagrams naturally keep track of structural properties, such as interconnectivity; they factor out irrelevant topological information that standard algebraic syntax needs to keep explicit; variable-sharing---the relational form of composition for systems---is depicted simply by wiring different components together.

As a result, a wealth of recent developments in computer science and beyond have adopted relations and their diagrammatic notation as a unifying language to reason about a broad range of systems, from electrical circuits to Petri nets~\cite{baez2018compositional,bonchiGraphicalAffineAlgebra2019,BonchiHPSZ19}. Many of these follow the same methodology. 1) Given a class of systems, find a set of diagrammatic generators from which any system can be specified, using the two available forms of composition. 2) Interpret each of them as a relation between the observable variables of the system that they describe. This defines a structure-preserving mapping---a monoidal functor---from the diagrammatic syntax to the semantics, from the two-dimensional representation of a system to its behaviour. 3) Finally, identify a convenient set of equations between diagrams, from which any semantic equality between the behaviour of the corresponding systems may be derived.

Graphical linear algebra (GLA) is a paradigmatic example of this approach. It provides a diagrammatic syntax to reason compositionally about different types of linear dynamical systems (including for instance traditional \emph{signal flow graphs}) and prove their behavioural equivalence purely diagrammatically. 
The syntax of GLA is generated by the following primitive components:
\begin{equation*}\label{eq:syntax-gla}
    \Bcomult\mid\Bcounit\mid\Bmult\mid\Bunit\mid\Wcomult\mid\Wcounit\mid\Wmult\mid\Wunit\mid\scalar{x} \quad (x\in\field)
\end{equation*}
As relations, the black nodes force all of their ports to share the same value; the white nodes constrain their left ports and the right ports to sum to the same value (or to zero when there are no left/right ports); the final generator, parameterised by an element of the chosen field $\field$, behaves as an amplifier: its right value is $x$ times the left value. Following point 3) of the methodology sketched above, GLA enjoys a sound and complete equational theory for the specified semantics, called the theory of \emph{Interacting Hopf Algebras} ($\IH$). In summary, string diagrams with $n$ ports on the left and $m$ ports on the right, quotiented by the axioms of IH, are precisely linear relations, i.e., linear subspaces of $\field^n\times\field^m$.
 
GLA was the starting point of different extensions, in several directions, two of which play a prominent role in this paper. First, Graphical Affine Algebra, which adds to the syntax a generator $\One$ for the constant $1$. This allows it to express \emph{affine} relations, i.e. affine subspaces of $\field^n\times\field^m$. A corresponding complete equational theory was presented in~\cite{bonchiGraphicalAffineAlgebra2019}. Then, Graphical Polyhedral Algebra (GPA), which assumes that $\field$ is an ordered field and adds a generator $
\tikzset{x=1em, y=2.1ex}
\begin{tikzpicture}
	\begin{pgfonlayer}{nodelayer}
		\node [style=geq] (0) at (0, 0) {};
		\node [style=none] (2) at (-1.25, 0) {};
		\node [style=none] (3) at (1.25, 0) {};
	\end{pgfonlayer}
	\begin{pgfonlayer}{edgelayer}
		\draw [in=0, out=180] (0) to (2.center);
		\draw (3.center) to (0);
	\end{pgfonlayer}
\end{tikzpicture}
}
\tikzset{x=1em, y=1.5ex}
$ for this order. The resulting graphical calculus can express all polyhedral relations, i.e., polyhedra\footnote{For the case of $\mathbb{R}$, these include the usual polytopes, which are bounded subsets of $\mathbb{R}^n\times\mathbb{R}^m$, as well as proper polyhedra, which may have unbounded faces.} in $\field^n\times\field^m$, and also comes with its own complete axiomatisation.

In this paper, we define the most expressive member of the GLA family tree to date: Graphical Piecewise-Linear Algebra (GPLA) is a hybrid of symbolic and diagrammatic syntax for piecewise-linear (pl) relations---finite unions of polyhedra in $\field^n\times\field^m$---and a corresponding complete equational theory. We argue below that the proposed language strikes a convincing balance between structure and expressiveness. It is a simple extension of GPA~\cite{bonchiDiagrammaticPolyhedralAlgebra2021}, yet for $\field = \mathbb{R}$, it is sufficiently powerful to approximate any submanifold of $\mathbb{R}^n$ arbitrarily closely.

Furthermore, this extension completes a research program initiated in parallel with the birth of GLA~\cite{baez2018compositional,bonchiGraphicalAffineAlgebra2019,boisseauStringDiagrammaticElectrical2021}: its chief purpose was to give the informal graphical notation for electrical circuits a formal, compositional interpretation, with a corresponding equational theory. 

Until now however, the category-theoretic setting could only accommodate components with a linear (more precisely, affine) behaviour, such as resistors, inductors, capacitors, voltage and current sources.
GPLA finally makes it possible to reason equationally about electronic components, such as ideal diodes and transistors.
Even when the idealised physical behaviour of these components is not necessarily piecewise-linear,
GPLA is theoretically expressive enough to approximate it as closely as necessary. 
Indeed, piecewise-linear approximations of transistor behaviour have been proposed to bypass the unavoidable abstraction leaks of purely digital circuits~\cite{stephan1993physically}.
In this context, GPLA can serve as a form of abstract interpretation for electronic circuits, with adjustable precision to allow for the intended semantics to be as physically realistic as desired.
Of course, in practice, working with large diagrams can be prohibitive. But this is a limitation
shared by all members of the Graphical Algebra family, and developing convenient tools and
techniques for diagrammatic reasoning is an active research area.
Our main thrust is that piecewise-linearity provides the appropriate level of structure, where
general relations are too flexible to come with a useful equational theory, and linear relations are
too rigid to accommodate diodes and other electronic components. 

Finally, a remark about syntax. While it is possible to make the language purely diagrammatic, we found that what one gains in purity one loses in complexity. Ultimately, the hybrid syntax of union and diagrams is more convenient to manipulate and intuitive to read. In fact, this is not the first time that sums of diagrams appear in the literature~\cite{cvitanovic2008group}. Nevertheless, one of our central technical contributions is the rigorous definition of a syntax blending diagrams and binary joins, and the corresponding notion of equational theory.

\paragraph{Outline.} In \cref{sec:prelim} we recall the necessary mathematical
background, the fundamentals of diagrammatic syntax, and the language of Graphical Polyhedral Algebra
(GPA).
In \cref{sec:smlt}, we extend the diagrammatic syntax with unions and define the notion of symmetric monoidal semi-lattice theory. From
there, in \cref{sec:pl-relations}, we extend GPA with unions, to capture piecewise-linear relations,
and give this new language a theory that we prove is complete (\cref{thm:completeness}). This our
main technical contribution. In \cref{sec:alternative-syntax}, we explore alternative languages for
piecewise-linear relations, and show that they are all equally expressive. Finally, in \cref{sec:electronic}, we extend the compositional re-interpretation of electrical circuits from~\cite{boisseauStringDiagrammaticElectrical2021} to include electronic components, namely diodes and transistors. 


\section{Preliminaries}\label{sec:prelim}

Informally, our starting point is a simple diagrammatic language of circuits built from the following generators:
\begin{equation}\label{eq:syntax}
    \Bcomult\mid\Bcounit\mid\Bmult\mid\Bunit
    \mid\Wcomult\mid\Wcounit\mid\Wmult\mid\Wunit
    \mid\One
    \mid\scalar{\geq}
    \mid\scalar{x} \quad (x\in\field)
\end{equation}
We will explain how these basic components can be wired together and give them a formal interpretation.

\subsection{Props and Symmetric Monoidal Theories}\label{sec:smt}

The mathematical backbone of our approach is the notion of product and permutations category (prop), a 
structure which generalises standard algebraic theories~\cite{bonchi2018deconstructing}. Formally, a \emph{prop} is a
strict symmetric monoidal category (SMC) whose objects are the natural numbers and where the
monoidal product $\oplus$ on objects is given by addition. Equivalently, it is a strict SMC whose
objects are all monoidal products of a single generating object.
\emph{Prop morphisms} are strict
symmetric monoidal functors that act as the identity on objects.

Following an established methodology, we will define two props: $\Syn$ and $\Sem$, for the syntax and semantics respectively. To guarantee a compositional interpretation, we require $\sem{\cdot}\from\Syn\to\Sem$, the mapping of terms to their intended semantics, to be a prop morphism.

Typically, the syntactic prop $\Syn$ is freely generated from a \emph{monoidal signature} $\Sigma$,
i.e. a set of arrows $g\from m \to n$.
In this case, we use the notation $\freeProp{\Sigma}$ and $\Syn$ interchangeably.
Morphisms of $\freeProp{\Sigma}$ are terms of an $(\mathbb{N},\mathbb{N})$-sorted syntax, whose constants are elements of $\Sigma$ and
whose operations are the usual composition $(-);(-)\from \Syn(n,m)\times \Syn(m,l)\to
\Syn(n,l)$ and the monoidal product $(-)\oplus(-)\from \Syn(n_1,m_1)\times \Syn(n_2,m_2)\to
\Syn(n_1+n_2,m_1+m_2)$, quotiented by the laws of SMCs.
But this quotient is cumbersome and unintuitive to work with.

This is why we will prefer a different representation.
With their two forms of composition, monoidal categories admit a natural two-dimensional graphical notation of \emph{string diagrams}.
The idea is that an arrow $c\from n\to m$ of $\freeProp{\Sigma}$ is better represented
as a box with $n$ ordered wires on the left and $m$ on the left.
We can compose these diagrams in two different ways:
horizontally, by connecting the right wires of one diagram to the left wires of another,
and vertically by juxtaposing two diagrams:
\[c\poi d = \quad 
\tikzset{x=1em, y=2.1ex}
\InputIfFileExists{comp-sequential-n.tikz}{}{\input{./tikz/comp-sequential-n.tikz}}
\tikzset{x=1em, y=1.5ex}
 \qquad \qquad d_1\oplus d_2 = \quad 
\tikzset{x=1em, y=2.1ex}
\InputIfFileExists{comp-parallel-n.tikz}{}{\input{./tikz/comp-parallel-n.tikz}}
\tikzset{x=1em, y=1.5ex}
\]
where the labelled wire $\idone^n$ is syntactic sugar for a stack of $n$ wires.
%
The identity $id_1\from 1 \to
1$ is denoted as a plain wire $\idone$, the symmetry $\sigma_{1,1}\from 2 \to 2$ as a wire
crossing $\sym$, and the unit for $\oplus$, $id_0\from 0 \to 0$, as the empty diagram $\idzero$.
With this representation the laws of SMCs become diagrammatic tautologies.

Once we have defined $\sem{\cdot}\from\Syn\to\Sem$, it is natural to look for equations to reason about semantic equality directly on the diagrams themselves. Given a set of equations $E$, i.e., a set containing pairs of arrows of the same type, we write $\eqE{E}$ for the smallest congruence wrt the two composition operations $;$ and $\oplus$.   We say that $\eqE{E}$ is \emph{sound} if $c\eqE{E} d$ implies $\sem{c} = \sem{d}$. It is moreover \emph{complete} when the converse implication holds. We call a pair $(\Sigma, E)$ a \emph{symmetric monoidal theory} (SMT) and we can form the prop $\freeProp{\Sigma}_{/E}$ obtained by quotienting $\freeProp{\Sigma}$ by $\eqE{E}$. There is then a prop morphism $q\from \freeProp{\Sigma}\to\freeProp{\Sigma}_{/E}$ witnessing this quotient. 

We may also wonder what the expressive power of our diagrammatic language is. In terms of props we look to characterise precisely the image $\mathsf{Im}(\sem{\cdot})$ of the syntax via $\sem{\cdot}$.

The situation for a sound and complete SMT is summarised in the commutative diagram below right.

\begin{minipage}{0.5\textwidth}
\noindent Soundness simply means that $\sem{\cdot}$ factors as $s\circ q$ through $\freeProp{\Sigma}_{/E}$ and completeness means that $s$ is a faithful prop morphism.
\end{minipage}
\begin{minipage}{0.5\textwidth}
\begin{tikzcd}
\freeProp{\Sigma}_{/E} \arrow[rr, "\cong"] \arrow[rrd, "s"]                  &  & \mathsf{Im}(\sem{\cdot}) \arrow[d, "i", hook] \\
\Syn = \freeProp{\Sigma} \arrow[u, "q", two heads] \arrow[rr, "\sem{\cdot}"'] &  & \Sem                                 
\end{tikzcd}
\end{minipage}

Typically, our semantic prop $\Sem$ will be (a subcategory of) the category of sets and relations.
\begin{definition}\label{def:relk}
Let $\field$ be a field. $\RelX{\field}$ is the prop 
\begin{itemize}
\item whose arrows $n\to m$ are relations $R\subseteq \field^n\times \field^m$,
\item with composition given by $R;S = \{(x,z)\mid \exists y.\; (x,y)\in R \land (y,z)\in S\}$, 
for $R\from n \to m$, $S\from m\to l$, and identity $n\to n$ the diagonal $\{(x,x)\mid x\in \field^n\}$,
\item monoidal product given by
\[R_1\oplus R_2 = \left\{\left(\begin{pmatrix}x_1\\ x_2\end{pmatrix}, \begin{pmatrix}y_1\\ y_2\end{pmatrix}\right)\mid (x_1,y_1)\in R_1 \land (x_2,y_2)\in R_2\right\}\]
for $R_1\from n_1 \to m_1$ and $R_2\from n_2\to m_2$,
\item symmetry $n+m \to m+n$, the relation $\left\{\left(\begin{pmatrix}x\\ y\end{pmatrix}, \begin{pmatrix}y\\ x\end{pmatrix}\right)\mid (x,y)\in \field^n\times\field^m\right\}\text{.}$
\end{itemize}
\end{definition}

\subsection{Ordered Props and Symmetric Monoidal Inequality Theories}\label{sec:smit}

Our semantic prop---$\RelX{\field}$---carries additional structure that we wish to lift to the syntax: as subsets of $\field^n\times\field^m$, relations $n\to m$ can be ordered by inclusion. The corresponding structure is that of an \emph{ordered prop}, a prop enriched over the category of posets, whose composition and monoidal product are monotone maps. 

If props can be presented by SMTs, ordered props can be presented by \emph{symmetric monoidal inequality theories} (SMIT). Formally, the data of a SMIT is the same as that of a SMT: a signature $\Sigma$ 
and a set $I$ of pairs $c,d\from n\to m$ of $\freeProp{\Sigma}$-arrows  of the same type, that we now read as \emph{inequalities} $c\leq d$. 

As for plain props, we can construct an ordered prop from a SMIT by building the free prop $\freeProp{\Sigma}$ and passing to a quotient $\freeProp{\Sigma}_{/I}$. First, we build a preorder on each homset by closing $I$ under $\oplus$ and taking the reflexive and transitive closure of the resulting relation. Then, we obtain the free ordered prop $\freeProp{\Sigma}_{/I}$ by quotienting the resulting preorder by imposing anti-symmetry.

SMITs subsume SMTs, since every SMT can be presented as a SMIT, by splitting each equation into two inequalities. 
We will refer to both simply as \emph{theories} and their defining inequalities as \emph{axioms}. When referring to a sound and complete theory, we will also use the term \emph{axiomatisation}, as is standard in the literature.

\subsection{Graphical Polyhedral Algebra}\label{sec:polyhedra}

We now assume that $\field$ is an ordered field, that is, a field equipped with a total order $\geq$ compatible with the field operations in the following sense: for all $x,y,z\in\field$, \emph{i)} if $x\geq y$ then $x+z\geq y+z$, and \emph{ii)} if $x\geq 0$  and $y\geq 0$ then $xy\geq 0$.

Following~\cite{bonchiDiagrammaticPolyhedralAlgebra2021}, from the generators in~\eqref{eq:syntax}, we define a prop, give it a semantics in $\RelX{\field}$, characterise the image of the semantic functor, and describe an axiomatisation for the specified semantics. 
\begin{itemize}
\item For $\Sigma^+_\geq = \{\Bcomult,\Bcounit,\Bmult,\Bunit
    ,\Wcomult,\Wcounit,\Wmult,\Wunit
    ,\One
    ,\scalar{\geq}
    ,\scalar{r} (r\in\field)\}$ define $\sem{\cdot}\from \mathsf{P}\Sigma^+_{\geq}  \to \RelX{\field}$ to be the prop morphism given by
\begin{equation*}\label{eq:mat-semantics}
\def\arraystretch{1.3}
\begin{array}{cclcccl}
    \sem{\Bcomult} &\df& \left\{\left(x,\begin{pmatrix} x\\ x \end{pmatrix}\right) \mid x \in \field \right\}
    &&
    \sem{\Bcounit} &\df& \{(x,\bullet) \mid x \in \field \}
    \\
    \sem{\Bmult} &\df& \left\{\left(\begin{pmatrix} x\\ x \end{pmatrix},x\right) \mid x \in \field \right\}
    &&
    \sem{\Bunit} &\df& \{(\bullet,x) \mid x \in \field \}
    \\
    \sem{\Wcomult} &\df& \left\{\left(x+y,\begin{pmatrix} x\\ {y} \end{pmatrix}\right) \mid x,{y} \in \field \right\}
    &&
    \sem{\Wcounit} &\df& \{(0,\bullet)\}
    \\
    \sem{\Wmult} &\df& \left\{\left(\begin{pmatrix} x\\ {y} \end{pmatrix},{x+y}\right) \mid x,{y} \in \field \right\}
    &&
    \sem{\Wunit} &\df& \{(\bullet, {0})  \}
    \\
    \sem{\scalar{k}} &\df&
        \multicolumn{4}{l}{\{(x, k \cdot x) \mid x \in \field \} \text{\; for } k \in \field}
    \\
    \sem{
\tikzset{x=1em, y=2.1ex}
}
\tikzset{x=1em, y=1.5ex}
} &\df& \{(x,y)\in \field\times \field\mid x \geq y\}
    &&
    \sem{\One} &\df& \{(\bullet, 1)\}
\end{array}
\end{equation*}
\item The image of $\mathsf{P}{\Sigma^+_\geq}$ by $\sem{\cdot}$ is the prop whose arrows $n\to m$ are finitely generated \emph{polyhedra} of $\field^n\times\field^m$, i.e., subsets of the form 
\[\left\{(x,y)\in \field^n\times\field^m \mid A\begin{pmatrix}x\\ y\end{pmatrix} + b \geq 0 \right\}\] 
for some matrix $A$ and some vector $b$
(see \cite{bonchiDiagrammaticPolyhedralAlgebra2021} for more details, in particular the appendix for the proof that these form a prop).

\item $\IHpoly$ provides an axiomatisation of polyhedral relations~\cite[Corollary 25]{bonchiDiagrammaticPolyhedralAlgebra2021}; it can be found in the first four blocks of \cref{fig:ih-pl}.
\end{itemize}
\begin{example}[Duality]
Two diagrams play a special role in this paper: the half turns $
\tikzset{x=1em, y=2.1ex}
\InputIfFileExists{cup-black-small.tikz}{}{\input{./tikz/cup-black-small.tikz}}
\tikzset{x=1em, y=1.5ex}
$ and $
\tikzset{x=1em, y=2.1ex}
\InputIfFileExists{cap-black-small.tikz}{}{\input{./tikz/cap-black-small.tikz}}
\tikzset{x=1em, y=1.5ex}
$, called cup and cap, respectively. Using these and $\sym$, we can build cups and caps for any number $n$ of wires: $
\tikzset{x=1em, y=2.1ex}
\InputIfFileExists{cup-black-small-n.tikz}{}{\input{./tikz/cup-black-small-n.tikz}}
\tikzset{x=1em, y=1.5ex}
$ and $
\tikzset{x=1em, y=2.1ex}
\InputIfFileExists{cap-black-small-n.tikz}{}{\input{./tikz/cap-black-small-n.tikz}}
\tikzset{x=1em, y=1.5ex}
$. 

They allow us to associate a dual $d^{op}\from n \to m$ to any diagram $d\from m \to n$ by turning its left ports into right ports and vice-versa:
\begin{equation}\label{eq:flip}
\diagbox{\scriptstyle d^{op}}{n}{m}\; = \;
\tikzset{x=1em, y=2.1ex}
\InputIfFileExists{d-op.tikz}{}{\input{./tikz/d-op.tikz}}
\tikzset{x=1em, y=1.5ex}
 
\end{equation}
Correspondingly, $\sem{d^{op}}$ is the \emph{opposite} relation, i.e.  $\sem{d^{op}} = \left\{(y,x)\mid (x,y)\in \sem{d}\right\}$. We will use of a suggestive mirror notation to denote the dual of a given generator: $\coscalar{r} := \left(\scalar{r}\right)^{op}$, $
\tikzset{x=1em, y=2.1ex}
}
\tikzset{x=1em, y=1.5ex}
 := \left(\One\right)^{op}$ and $
\tikzset{x=1em, y=2.1ex}
\begin{tikzpicture}
	\begin{pgfonlayer}{nodelayer}
		\node [style=leq] (0) at (0, 0) {};
		\node [style=none] (2) at (-1.25, 0) {};
		\node [style=none] (3) at (1.25, 0) {};
	\end{pgfonlayer}
	\begin{pgfonlayer}{edgelayer}
		\draw [in=0, out=180] (0) to (2.center);
		\draw (3.center) to (0);
	\end{pgfonlayer}
\end{tikzpicture}
}
\tikzset{x=1em, y=1.5ex}
:=\left(
\tikzset{x=1em, y=2.1ex}
}
\tikzset{x=1em, y=1.5ex}
\right)^{op}$.

\end{example}

\section{Symmetric Monoidal Semi-Lattice Theories}\label{sec:smlt}

There are several routes to describe piecewise-linear subsets of $\field^n$. In this paper we choose to equip our syntax with a primitive operation of join, in order to describe piecewise-linear sets as (finite) unions of polyhedra.
In the same way that we moved from simple props to ordered props in \cref{sec:smit},
we now move to the setting of semi-lattice-enriched props.

A \emph{$\cup$-prop} is a prop enriched over the category of semi-lattices and join-preserving maps, i.e. whose homsets are semi-lattices, with composition and monoidal product join-preserving maps. The paradigmatic example is $\RelX{\field}$ which is a $\cup$-prop with the union of relations as join.

As we would like to incorporate binary joins into our syntax, we need a new description of the free $\cup$-prop $\freeUProp{\Sigma}$ over a given signature $\Sigma$. 
\begin{itemize}
\item The arrows $n\to m$ of $\freeUProp{\Sigma}$ are nonempty finite sets of arrows $n \to m$ of $\freeUProp{\Sigma}$. We use capital letters $C, D\dots$ to range over them. We will also abuse notation slightly, using $c,d\dots$ to refer to singletons $\{c\}, \{d\}\dots$ and writing $d_1\cup \dots \cup d_k$ for the set $\{d_1, \dots, d_k\}$.
\item The composition of $C\from n\to m$ and $D\from m\to l$ is given by $C\poi D = \{c\poi d \mid c\in C, d\in D\}$ where $c\poi d$ denotes composition in $\mathsf{P}_\Sigma$. The identity over $n$ is the singleton $\{id_n\}$.
\item The monoidal product of $D_1\from n_1\to m_1$ and $D_2\from n_2\to m_2$ is given by $D_1\oplus D_2 = \{d_1\oplus d_2\mid d_1\in D_1, d_2\in D_2 \}$ where $d_1\oplus d_2$ is the monoidal product in $\freeUProp{\Sigma}$.
\item For the enrichment, each homset $\freeUProp{\Sigma}(n,m)$ is a semi-lattice with union as join. By definition, composition and monoidal product distribute over union and define join-preserving maps $(-)\poi (-)\from \freeUProp{\Sigma}(n,m)\times \freeUProp{\Sigma}(m,l)\to \freeUProp{\Sigma}(n,l)$ and $(-)\oplus(-)\from \freeUProp{\Sigma}(n_1,m_1)\times \freeUProp{\Sigma}(n_2,m_2)\to \freeUProp{\Sigma}(n_1+n_2,m_1+m_2)$
\end{itemize}

We now define a corresponding notion of theory for $\cup$-props. A \emph{symmetric monoidal (semi-)lattice theory} (SMLT) is the data of a signature $\Sigma$ and a set $E$ of equations: formally the latter is a set of pairs $(C, D)$ of arrows $C,D\from n \to m$ from $\freeUProp{\Sigma}$. We will write the elements of $E$ as equations of the form $ \bigcup_{c\in C}c = \bigcup_{d\in D} d$. We now explain how to define a $\cup$-prop $\freeUProp{\Sigma}_{/E}$ from the data of an SMLT $(\Sigma, E)$. As for SMTs, we can build the smallest congruence $\eqE{E}$ wrt to $\poi$ and $\oplus$, which equates the pairs in $E$. Then define $\freeUProp{\Sigma}_{/E}$ to be the quotient of $\freeUProp{\Sigma}$ by $\eqE{E}$. That this is a well-defined $\cup$-prop follows again from the distributivity of the composition and monoidal product over unions.

Note that the semi-lattice structure allows us to define an order over the homsets of any $\cup$-prop, making it into an ordered prop: we write $C\subseteq D$ as a shorthand for $C\cup D = D$. We will also use $C\leqE{E} D$ for $C\cup D \eqE{E} D$ in $\freeUProp{\Sigma}_{/E}$. (We prefer this notation to avoid the confusion with the order $\geq$ on $\field$ itself.)

\begin{remark}[Reasoning in $\cup$-props]
The reader familiar with string diagrams and equational reasoning might be surprised by certain features of derivations that combine diagrammatic and traditional syntax (joins, in this case). We would like to clarify one particular point. When we want to use an equality of the form $d = d_1 \cup d_2$ inside a term of the form $c_1 \cup c_2 \cup c$, we need to identify a context $C[-]$ common to $c_1$ and $c_2$ such that $c_1 = C[d_1]$ and $c_2 = C[d_2]$. Then we are allowed to conclude that $c_1 \cup c_2 \cup c = C[d] \cup c$. An example of this form of reasoning can be found in the proof of \cref{lem:pl-nf}, which we reproduce here: we apply the equality $\Bunit \;\myeq{total} \quad 
\tikzset{x=1em, y=2.1ex}
\begin{tikzpicture}
	\begin{pgfonlayer}{nodelayer}
		\node [style=leq] (0) at (-4, 0) {};
		\node [style=white] (2) at (-5.25, 0) {};
		\node [style=none] (3) at (-2.75, 0) {};
	\end{pgfonlayer}
	\begin{pgfonlayer}{edgelayer}
		\draw [in=0, out=180] (0) to (2);
		\draw (3.center) to (0);
	\end{pgfonlayer}
\end{tikzpicture}
}
\tikzset{x=1em, y=1.5ex}
 \cup \; 
\tikzset{x=1em, y=2.1ex}
\begin{tikzpicture}
	\begin{pgfonlayer}{nodelayer}
		\node [style=geq] (0) at (-4, 0) {};
		\node [style=white] (2) at (-5.25, 0) {};
		\node [style=none] (3) at (-2.75, 0) {};
	\end{pgfonlayer}
	\begin{pgfonlayer}{edgelayer}
		\draw [in=0, out=180] (0) to (2);
		\draw (3.center) to (0);
	\end{pgfonlayer}
\end{tikzpicture}
}
\tikzset{x=1em, y=1.5ex}
$ in
\begin{equation*}

\tikzset{x=1em, y=2.1ex}
\InputIfFileExists{Y+.tikz}{}{\input{./tikz/Y+.tikz}}
\tikzset{x=1em, y=1.5ex}
 \cup  \;
\tikzset{x=1em, y=2.1ex}
\InputIfFileExists{Y-.tikz}{}{\input{./tikz/Y-.tikz}}
\tikzset{x=1em, y=1.5ex}
 = 
\tikzset{x=1em, y=2.1ex}
\InputIfFileExists{Y+UY-.tikz}{}{\input{./tikz/Y+UY-.tikz}}
\tikzset{x=1em, y=1.5ex}
\;\myeq{total} \; 
\tikzset{x=1em, y=2.1ex}
\InputIfFileExists{Y+UY-tot.tikz}{}{\input{./tikz/Y+UY-tot.tikz}}
\tikzset{x=1em, y=1.5ex}

\end{equation*}
Note that, to clarify the common context to the reader, we will often use the intermediate notation $C[d_1\cup d_2]$, as we did in the first step above.
\end{remark}

\section{The Theory of Piecewise-Linear Relations}
\label{sec:pl-relations}

\subsection{Syntax and Semantics}

For piecewise-linear relations we retain the same signature $\Sigma_\geq^+$ and consider $\freeUProp(\Sigma_\geq^+)$, the free $\cup$-prop over it. As we saw, its morphisms are nonempty finite sets of diagrams of $\freeProp\Sigma_\geq^+$. This is our syntax. 

On the semantic side, we now need to extend the functor $\sem{\cdot}$ to have $\freeProp\Sigma_\geq^+$ as domain, retaining $\RelX{\field}$ as codomain. Concretely, since we already know how to assign a relation to each diagram of  $\freeProp\Sigma_\geq^+$, we only need to specify how to interpret finite sets of such diagrams: unsurprisingly, we set
\[\sem{\{d_1, \dots, d_n\}} := \sem{d_1}\cup \dots \cup \sem{d_n}\]
This is join-preserving by construction, and remains monoidal and functorial.

By definition, we call piecewise-linear (pl) any relation in the image of this functor, i.e., any relation that is a finite union of polyhedral relations. As far as we know, this is the first time that this notion appears in print. However, it does capture our intuitive notion of piecewise-linearity as submanifolds of $\field^n$ that can be subdivided into linear subspaces.

\subsection{Equational Theory}\label{sec:axioms}

$\IHpl$, the SMLT of pl relations, is presented in \cref{fig:ih-pl}. The first block is the theory of matrices/linear maps; the second block, $\IH$, axiomatises all linear relations; the third block axiomatises the behaviour of the order $
\tikzset{x=1em, y=2.1ex}
}
\tikzset{x=1em, y=1.5ex}
$; the fourth, deals with the affine fragment of the theory, axiomatising the behaviour of the constant $
\tikzset{x=1em, y=2.1ex}
}
\tikzset{x=1em, y=1.5ex}
$. Taken together, those four blocks constitute $\IHpoly$, an axiomatisation of polyhedral relations---we refer the reader to~\cite{bonchiDiagrammaticPolyhedralAlgebra2021} for more details on this fragment.

The key addition of $\IHpl$ is the last block, containing the axiom of \emph{totality}, which states that any real number belongs to the non-negative \emph{or} to the non-positive fragment of $\field$. Remarkably, this simple axiom is the only one we need to add to $\IHpoly$ to obtain a complete theory for pl relations. Its soundness is simply a consequence of the definition of an ordered field: the order is assumed to be total in the sense that, for any $x,y\in\field$ we have $x\leq y$ or $y\leq x$. Take $y=0$ to recover the last axiom of $\IHpl$.

\begin{figure}
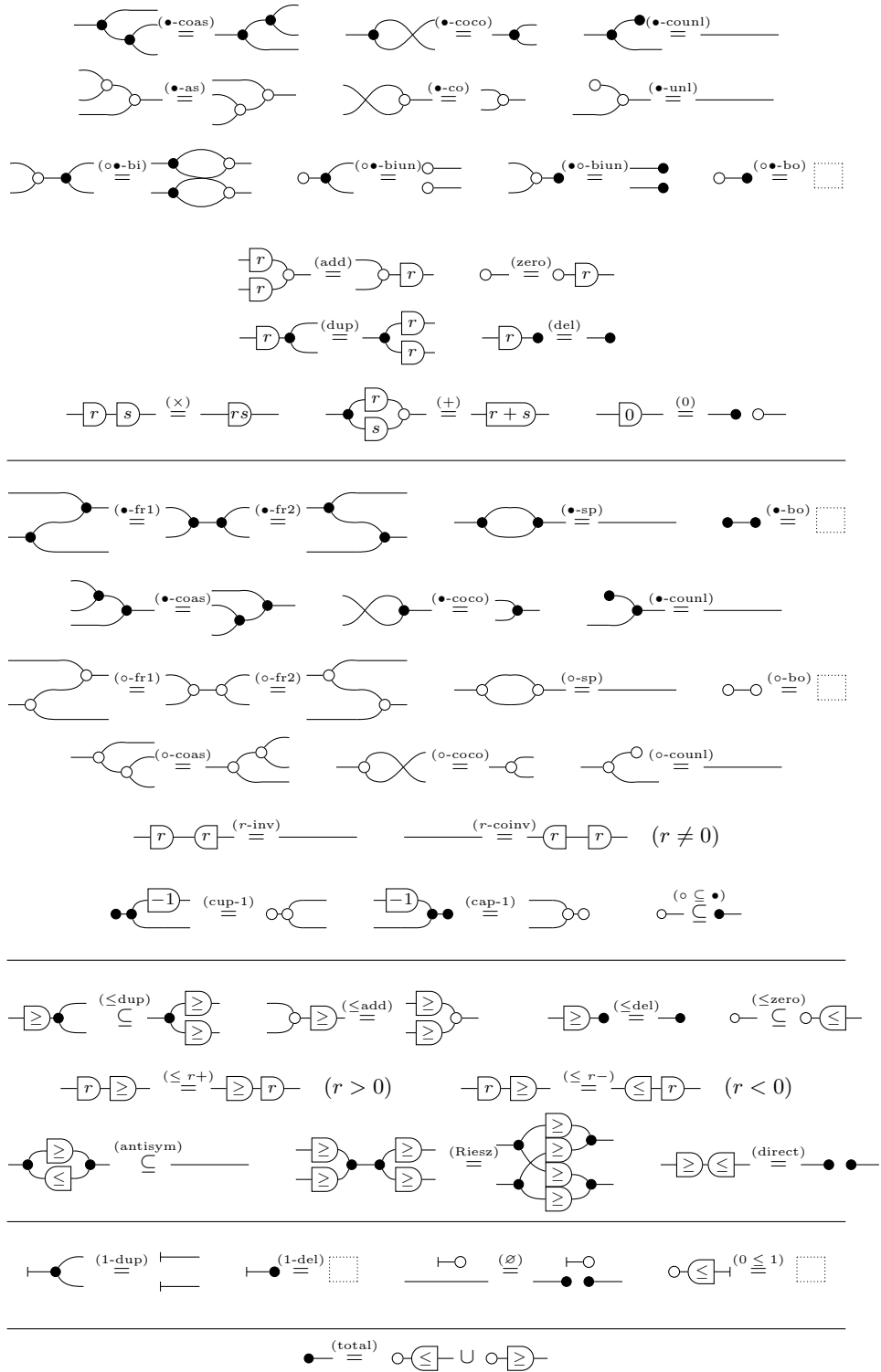

\begin{equation*}

\tikzset{x=1em, y=2.1ex}
\InputIfFileExists{ax/copy-associative.tikz}{}{\input{./tikz/ax/copy-associative.tikz}}
\tikzset{x=1em, y=1.5ex}
\;\;\myeq{$\bullet$-coas}\;\; 
\tikzset{x=1em, y=2.1ex}
\InputIfFileExists{ax/copy-associative-1.tikz}{}{\input{./tikz/ax/copy-associative-1.tikz}}
\tikzset{x=1em, y=1.5ex}
\qquad  
\tikzset{x=1em, y=2.1ex}
\InputIfFileExists{ax/copy-commutative.tikz}{}{\input{./tikz/ax/copy-commutative.tikz}}
\tikzset{x=1em, y=1.5ex}
\;\;\myeq{$\bullet$-coco}\;\;
\tikzset{x=1em, y=2.1ex}
\InputIfFileExists{generators/copy.tikz}{}{\input{./tikz/generators/copy.tikz}}
\tikzset{x=1em, y=1.5ex}
\qquad  
\tikzset{x=1em, y=2.1ex}
\InputIfFileExists{ax/copy-unital-left.tikz}{}{\input{./tikz/ax/copy-unital-left.tikz}}
\tikzset{x=1em, y=1.5ex}
\myeq{$\bullet$-counl}\;
\tikzset{x=1em, y=2.1ex}
\InputIfFileExists{generators/id.tikz}{}{\input{./tikz/generators/id.tikz}}
\tikzset{x=1em, y=1.5ex}
\end{equation*}
\begin{equation*}

\tikzset{x=1em, y=2.1ex}
\InputIfFileExists{ax/add-associative.tikz}{}{\input{./tikz/ax/add-associative.tikz}}
\tikzset{x=1em, y=1.5ex}
\;\myeq{$\bullet$-as}\;\: 
\tikzset{x=1em, y=2.1ex}
\InputIfFileExists{ax/add-associative-1.tikz}{}{\input{./tikz/ax/add-associative-1.tikz}}
\tikzset{x=1em, y=1.5ex}
\qquad  
\tikzset{x=1em, y=2.1ex}
\InputIfFileExists{ax/add-commutative.tikz}{}{\input{./tikz/ax/add-commutative.tikz}}
\tikzset{x=1em, y=1.5ex}
\myeq{$\bullet$-co}\;\;\,
\tikzset{x=1em, y=2.1ex}
\InputIfFileExists{generators/add.tikz}{}{\input{./tikz/generators/add.tikz}}
\tikzset{x=1em, y=1.5ex}
\qquad  
\tikzset{x=1em, y=2.1ex}
\InputIfFileExists{ax/add-unital-left.tikz}{}{\input{./tikz/ax/add-unital-left.tikz}}
\tikzset{x=1em, y=1.5ex}
\;\myeq{$\bullet$-unl}\;
\tikzset{x=1em, y=2.1ex}
\InputIfFileExists{generators/id.tikz}{}{\input{./tikz/generators/id.tikz}}
\tikzset{x=1em, y=1.5ex}

\end{equation*}
\vspace{1pt}
\begin{equation*}

\tikzset{x=1em, y=2.1ex}
\InputIfFileExists{ax/add-copy-bimonoid.tikz}{}{\input{./tikz/ax/add-copy-bimonoid.tikz}}
\tikzset{x=1em, y=1.5ex}
\;\;\myeq{$\circ\bullet$-bi}\;\;
\tikzset{x=1em, y=2.1ex}
\InputIfFileExists{ax/add-copy-bimonoid-1.tikz}{}{\input{./tikz/ax/add-copy-bimonoid-1.tikz}}
\tikzset{x=1em, y=1.5ex}
 \qquad 
\tikzset{x=1em, y=2.1ex}
\InputIfFileExists{ax/add-copy-bimonoid-unit.tikz}{}{\input{./tikz/ax/add-copy-bimonoid-unit.tikz}}
\tikzset{x=1em, y=1.5ex}
 \;\;\;\myeq{$\circ\bullet$-biun}\;\;\; 
\tikzset{x=1em, y=2.1ex}
\InputIfFileExists{ax/add-bimonoid-unit-1.tikz}{}{\input{./tikz/ax/add-bimonoid-unit-1.tikz}}
\tikzset{x=1em, y=1.5ex}
 \qquad 
\tikzset{x=1em, y=2.1ex}
\InputIfFileExists{ax/add-copy-bimonoid-counit.tikz}{}{\input{./tikz/ax/add-copy-bimonoid-counit.tikz}}
\tikzset{x=1em, y=1.5ex}
 \;\;\,\myeq{$\bullet\circ$-biun}\;\;\; 
\tikzset{x=1em, y=2.1ex}
\InputIfFileExists{ax/add-copy-bimonoid-counit-1.tikz}{}{\input{./tikz/ax/add-copy-bimonoid-counit-1.tikz}}
\tikzset{x=1em, y=1.5ex}
\qquad
\tikzset{x=1em, y=2.1ex}
\InputIfFileExists{ax/bone-white-black.tikz}{}{\input{./tikz/ax/bone-white-black.tikz}}
\tikzset{x=1em, y=1.5ex}
\;\;\myeq{$\circ\bullet$-bo}\;\;\;
\tikzset{x=1em, y=2.1ex}
\InputIfFileExists{empty-diag.tikz}{}{\input{./tikz/empty-diag.tikz}}
\tikzset{x=1em, y=1.5ex}

\end{equation*}
\vspace{1pt}
\begin{align*}
  
\tikzset{x=1em, y=2.1ex}
\InputIfFileExists{ax/reals-add.tikz}{}{\input{./tikz/ax/reals-add.tikz}}
\tikzset{x=1em, y=1.5ex}
\;\myeq{add}\;
\tikzset{x=1em, y=2.1ex}
\InputIfFileExists{ax/reals-add-1.tikz}{}{\input{./tikz/ax/reals-add-1.tikz}}
\tikzset{x=1em, y=1.5ex}
 \qquad 
\tikzset{x=1em, y=2.1ex}
\InputIfFileExists{generators/zero.tikz}{}{\input{./tikz/generators/zero.tikz}}
\tikzset{x=1em, y=1.5ex}
\;\myeq{zero}\;
\tikzset{x=1em, y=2.1ex}
\InputIfFileExists{ax/reals-zero.tikz}{}{\input{./tikz/ax/reals-zero.tikz}}
\tikzset{x=1em, y=1.5ex}
\\ 
  
\tikzset{x=1em, y=2.1ex}
\InputIfFileExists{ax/reals-copy.tikz}{}{\input{./tikz/ax/reals-copy.tikz}}
\tikzset{x=1em, y=1.5ex}
\;\myeq{dup}\; 
\tikzset{x=1em, y=2.1ex}
\InputIfFileExists{ax/reals-copy-1.tikz}{}{\input{./tikz/ax/reals-copy-1.tikz}}
\tikzset{x=1em, y=1.5ex}
 \qquad 
\tikzset{x=1em, y=2.1ex}
\InputIfFileExists{ax/reals-delete.tikz}{}{\input{./tikz/ax/reals-delete.tikz}}
\tikzset{x=1em, y=1.5ex}
\;\myeq{del}\;
\tikzset{x=1em, y=2.1ex}
\InputIfFileExists{generators/delete.tikz}{}{\input{./tikz/generators/delete.tikz}}
\tikzset{x=1em, y=1.5ex}

\end{align*}
\begin{equation*}
  
\tikzset{x=1em, y=2.1ex}
\InputIfFileExists{ax/reals-multiplication.tikz}{}{\input{./tikz/ax/reals-multiplication.tikz}}
\tikzset{x=1em, y=1.5ex}
\;\myeq{$\times$}\;
\tikzset{x=1em, y=2.1ex}
\InputIfFileExists{ax/reals-multiplication-1.tikz}{}{\input{./tikz/ax/reals-multiplication-1.tikz}}
\tikzset{x=1em, y=1.5ex}
 \qquad    
\tikzset{x=1em, y=2.1ex}
\InputIfFileExists{ax/reals-sum.tikz}{}{\input{./tikz/ax/reals-sum.tikz}}
\tikzset{x=1em, y=1.5ex}
\;\myeq{$+$}\;
\tikzset{x=1em, y=2.1ex}
\InputIfFileExists{ax/reals-sum-1.tikz}{}{\input{./tikz/ax/reals-sum-1.tikz}}
\tikzset{x=1em, y=1.5ex}
\qquad   
\tikzset{x=1em, y=2.1ex}
\InputIfFileExists{ax/reals-scalar-zero.tikz}{}{\input{./tikz/ax/reals-scalar-zero.tikz}}
\tikzset{x=1em, y=1.5ex}
\;\myeq{$0$}\;
\tikzset{x=1em, y=2.1ex}
\InputIfFileExists{ax/reals-scalar-zero-1.tikz}{}{\input{./tikz/ax/reals-scalar-zero-1.tikz}}
\tikzset{x=1em, y=1.5ex}

\end{equation*}
\vspace{2pt}
\hrule
\vspace{3pt}
\begin{equation*}

\tikzset{x=1em, y=2.1ex}
\InputIfFileExists{ax/copy-Frobenius-left.tikz}{}{\input{./tikz/ax/copy-Frobenius-left.tikz}}
\tikzset{x=1em, y=1.5ex}
\;\;\myeq{$\bullet$-fr1}\;\; 
\tikzset{x=1em, y=2.1ex}
\InputIfFileExists{ax/copy-Frobenius.tikz}{}{\input{./tikz/ax/copy-Frobenius.tikz}}
\tikzset{x=1em, y=1.5ex}
\;\;\myeq{$\bullet$-fr2}\;\; 
\tikzset{x=1em, y=2.1ex}
\InputIfFileExists{ax/copy-Frobenius-right.tikz}{}{\input{./tikz/ax/copy-Frobenius-right.tikz}}
\tikzset{x=1em, y=1.5ex}
 \qquad 
\tikzset{x=1em, y=2.1ex}
\InputIfFileExists{ax/copy-special.tikz}{}{\input{./tikz/ax/copy-special.tikz}}
\tikzset{x=1em, y=1.5ex}
\myeq{$\bullet$-sp}
\tikzset{x=1em, y=2.1ex}
\InputIfFileExists{generators/id.tikz}{}{\input{./tikz/generators/id.tikz}}
\tikzset{x=1em, y=1.5ex}
\qquad 
\tikzset{x=1em, y=2.1ex}
\InputIfFileExists{ax/bone-black.tikz}{}{\input{./tikz/ax/bone-black.tikz}}
\tikzset{x=1em, y=1.5ex}
\;\;\myeq{$\bullet$-bo}\;\;
\tikzset{x=1em, y=2.1ex}
\InputIfFileExists{empty-diag.tikz}{}{\input{./tikz/empty-diag.tikz}}
\tikzset{x=1em, y=1.5ex}

\end{equation*}
\vspace{1pt}
\begin{equation*}

\tikzset{x=1em, y=2.1ex}
\InputIfFileExists{ax/co-copy-associative.tikz}{}{\input{./tikz/ax/co-copy-associative.tikz}}
\tikzset{x=1em, y=1.5ex}
\;\;\myeq{$\bullet$-coas}\;\; 
\tikzset{x=1em, y=2.1ex}
\InputIfFileExists{ax/co-copy-associative-1.tikz}{}{\input{./tikz/ax/co-copy-associative-1.tikz}}
\tikzset{x=1em, y=1.5ex}
\qquad  
\tikzset{x=1em, y=2.1ex}
\InputIfFileExists{ax/co-copy-commutative.tikz}{}{\input{./tikz/ax/co-copy-commutative.tikz}}
\tikzset{x=1em, y=1.5ex}
\;\;\myeq{$\bullet$-coco}\;\;\;
\tikzset{x=1em, y=2.1ex}
\InputIfFileExists{generators/co-copy.tikz}{}{\input{./tikz/generators/co-copy.tikz}}
\tikzset{x=1em, y=1.5ex}
\qquad  
\tikzset{x=1em, y=2.1ex}
\InputIfFileExists{ax/co-copy-unital-left.tikz}{}{\input{./tikz/ax/co-copy-unital-left.tikz}}
\tikzset{x=1em, y=1.5ex}
\myeq{$\bullet$-counl}\;
\tikzset{x=1em, y=2.1ex}
\InputIfFileExists{generators/id.tikz}{}{\input{./tikz/generators/id.tikz}}
\tikzset{x=1em, y=1.5ex}
\end{equation*}
\vspace{2pt}
\begin{equation*}

\tikzset{x=1em, y=2.1ex}
\InputIfFileExists{ax/add-Frobenius-left.tikz}{}{\input{./tikz/ax/add-Frobenius-left.tikz}}
\tikzset{x=1em, y=1.5ex}
\;\;\myeq{$\circ$-fr1}\;\; 
\tikzset{x=1em, y=2.1ex}
\InputIfFileExists{ax/add-Frobenius.tikz}{}{\input{./tikz/ax/add-Frobenius.tikz}}
\tikzset{x=1em, y=1.5ex}
\;\;\myeq{$\circ$-fr2}\;\; 
\tikzset{x=1em, y=2.1ex}
\InputIfFileExists{ax/add-Frobenius-right.tikz}{}{\input{./tikz/ax/add-Frobenius-right.tikz}}
\tikzset{x=1em, y=1.5ex}
 
\qquad 
\tikzset{x=1em, y=2.1ex}
\InputIfFileExists{ax/add-special.tikz}{}{\input{./tikz/ax/add-special.tikz}}
\tikzset{x=1em, y=1.5ex}
\myeq{$\circ$-sp}
\tikzset{x=1em, y=2.1ex}
\InputIfFileExists{generators/id.tikz}{}{\input{./tikz/generators/id.tikz}}
\tikzset{x=1em, y=1.5ex}
\qquad
\tikzset{x=1em, y=2.1ex}
\InputIfFileExists{ax/bone-white.tikz}{}{\input{./tikz/ax/bone-white.tikz}}
\tikzset{x=1em, y=1.5ex}
\;\;\myeq{$\circ$-bo}\;\;
\tikzset{x=1em, y=2.1ex}
\InputIfFileExists{empty-diag.tikz}{}{\input{./tikz/empty-diag.tikz}}
\tikzset{x=1em, y=1.5ex}

\end{equation*}
\vspace{1pt}
\begin{equation*}

\tikzset{x=1em, y=2.1ex}
\InputIfFileExists{ax/co-add-associative-1.tikz}{}{\input{./tikz/ax/co-add-associative-1.tikz}}
\tikzset{x=1em, y=1.5ex}
\;\;\myeq{$\circ$-coas}\; 
\tikzset{x=1em, y=2.1ex}
\InputIfFileExists{ax/co-add-associative.tikz}{}{\input{./tikz/ax/co-add-associative.tikz}}
\tikzset{x=1em, y=1.5ex}
\qquad  
\tikzset{x=1em, y=2.1ex}
\InputIfFileExists{ax/co-add-commutative.tikz}{}{\input{./tikz/ax/co-add-commutative.tikz}}
\tikzset{x=1em, y=1.5ex}
\;\;\;\myeq{$\circ$-coco}\;\; 
\tikzset{x=1em, y=2.1ex}
\InputIfFileExists{generators/co-add.tikz}{}{\input{./tikz/generators/co-add.tikz}}
\tikzset{x=1em, y=1.5ex}
\qquad  
\tikzset{x=1em, y=2.1ex}
\InputIfFileExists{ax/co-add-unital-left.tikz}{}{\input{./tikz/ax/co-add-unital-left.tikz}}
\tikzset{x=1em, y=1.5ex}
\;\myeq{$\circ$-counl}\;
\tikzset{x=1em, y=2.1ex}
\InputIfFileExists{generators/id.tikz}{}{\input{./tikz/generators/id.tikz}}
\tikzset{x=1em, y=1.5ex}
\end{equation*}
\vspace{3pt}
\begin{equation*}
    
\tikzset{x=1em, y=2.1ex}
\InputIfFileExists{ax/scalar-division.tikz}{}{\input{./tikz/ax/scalar-division.tikz}}
\tikzset{x=1em, y=1.5ex}
\;\myeq{$r$-inv}\; 
\tikzset{x=1em, y=2.1ex}
\InputIfFileExists{generators/id.tikz}{}{\input{./tikz/generators/id.tikz}}
\tikzset{x=1em, y=1.5ex}
 \qquad 
\tikzset{x=1em, y=2.1ex}
\InputIfFileExists{generators/id.tikz}{}{\input{./tikz/generators/id.tikz}}
\tikzset{x=1em, y=1.5ex}
\;\myeq{$r$-coinv}\;
\tikzset{x=1em, y=2.1ex}
\InputIfFileExists{ax/scalar-co-division.tikz}{}{\input{./tikz/ax/scalar-co-division.tikz}}
\tikzset{x=1em, y=1.5ex}
\quad (r\neq 0)
\end{equation*}
\vspace{3pt}
\begin{equation*}
   
\tikzset{x=1em, y=2.1ex}
\InputIfFileExists{ax/cup-black-minus-1.tikz}{}{\input{./tikz/ax/cup-black-minus-1.tikz}}
\tikzset{x=1em, y=1.5ex}
\quad\myeq{cup-1}\quad 
\tikzset{x=1em, y=2.1ex}
\InputIfFileExists{ax/cup-white.tikz}{}{\input{./tikz/ax/cup-white.tikz}}
\tikzset{x=1em, y=1.5ex}
 \qquad 
\tikzset{x=1em, y=2.1ex}
\InputIfFileExists{ax/cap-black-minus-1.tikz}{}{\input{./tikz/ax/cap-black-minus-1.tikz}}
\tikzset{x=1em, y=1.5ex}
\quad\myeq{cap-1}\quad 
\tikzset{x=1em, y=2.1ex}
\InputIfFileExists{ax/cap-white.tikz}{}{\input{./tikz/ax/cap-white.tikz}}
\tikzset{x=1em, y=1.5ex}
  \qquad\quad  \Wunit \myleq{$\circ\subseteq\bullet$} \Bunit
\end{equation*}
\vspace{2pt}
\hrule
\vspace{3pt}
\begin{equation*}
   
\tikzset{x=1em, y=2.1ex}
\InputIfFileExists{ax/leq-copy.tikz}{}{\input{./tikz/ax/leq-copy.tikz}}
\tikzset{x=1em, y=1.5ex}
\quad\myleq{$\leq$dup}\; 
\tikzset{x=1em, y=2.1ex}
\InputIfFileExists{ax/copy-leq.tikz}{}{\input{./tikz/ax/copy-leq.tikz}}
\tikzset{x=1em, y=1.5ex}
 \qquad
   
\tikzset{x=1em, y=2.1ex}
\InputIfFileExists{ax/add-leq.tikz}{}{\input{./tikz/ax/add-leq.tikz}}
\tikzset{x=1em, y=1.5ex}
\;\myeq{$\leq$add}\quad 
\tikzset{x=1em, y=2.1ex}
\InputIfFileExists{ax/leq-add.tikz}{}{\input{./tikz/ax/leq-add.tikz}}
\tikzset{x=1em, y=1.5ex}
  \qquad\quad
   
\tikzset{x=1em, y=2.1ex}
\InputIfFileExists{ax/leq-del.tikz}{}{\input{./tikz/ax/leq-del.tikz}}
\tikzset{x=1em, y=1.5ex}
 \;\; \myeq{$\leq$del} \;\Bcounit \qquad
    \Wunit \; \myleq{$\leq$zero} \;
\tikzset{x=1em, y=2.1ex}
\InputIfFileExists{ax/zero-leq.tikz}{}{\input{./tikz/ax/zero-leq.tikz}}
\tikzset{x=1em, y=1.5ex}

\end{equation*}
\begin{equation*}
   
\tikzset{x=1em, y=2.1ex}
\InputIfFileExists{ax/scalar-leq.tikz}{}{\input{./tikz/ax/scalar-leq.tikz}}
\tikzset{x=1em, y=1.5ex}
\quad\myeq{$\leq r+$}\; 
\tikzset{x=1em, y=2.1ex}
\InputIfFileExists{ax/leq-scalar.tikz}{}{\input{./tikz/ax/leq-scalar.tikz}}
\tikzset{x=1em, y=1.5ex}
\quad (r > 0) \qquad\quad  
\tikzset{x=1em, y=2.1ex}
\InputIfFileExists{ax/scalar-leq.tikz}{}{\input{./tikz/ax/scalar-leq.tikz}}
\tikzset{x=1em, y=1.5ex}
\quad\myeq{$\leq r-$}\; 
\tikzset{x=1em, y=2.1ex}
\InputIfFileExists{ax/geq-scalar.tikz}{}{\input{./tikz/ax/geq-scalar.tikz}}
\tikzset{x=1em, y=1.5ex}
\quad (r < 0)
\end{equation*}
\begin{equation*}
   
\tikzset{x=1em, y=2.1ex}
\InputIfFileExists{ax/leq-meet-geq.tikz}{}{\input{./tikz/ax/leq-meet-geq.tikz}}
\tikzset{x=1em, y=1.5ex}
\quad\myleq{antisym}\; 
\tikzset{x=1em, y=2.1ex}
\InputIfFileExists{generators/id.tikz}{}{\input{./tikz/generators/id.tikz}}
\tikzset{x=1em, y=1.5ex}
 \qquad
   
\tikzset{x=1em, y=2.1ex}
\InputIfFileExists{ax/leq-bimonoid.tikz}{}{\input{./tikz/ax/leq-bimonoid.tikz}}
\tikzset{x=1em, y=1.5ex}
\quad\myeq{Riesz}\; 
\tikzset{x=1em, y=2.1ex}
\InputIfFileExists{ax/leq-bimonoid-1.tikz}{}{\input{./tikz/ax/leq-bimonoid-1.tikz}}
\tikzset{x=1em, y=1.5ex}
\qquad
   
\tikzset{x=1em, y=2.1ex}
\InputIfFileExists{ax/leq-geq.tikz}{}{\input{./tikz/ax/leq-geq.tikz}}
\tikzset{x=1em, y=1.5ex}
\;\;\myeq{direct}\; 
\tikzset{x=1em, y=2.1ex}
\InputIfFileExists{ax/bcounit-bunit.tikz}{}{\input{./tikz/ax/bcounit-bunit.tikz}}
\tikzset{x=1em, y=1.5ex}

\end{equation*}
\hrule
\vspace{3pt}
\begin{equation*}

\tikzset{x=1em, y=2.1ex}
\InputIfFileExists{ax/one-copy.tikz}{}{\input{./tikz/ax/one-copy.tikz}}
\tikzset{x=1em, y=1.5ex}
\quad\myeq{1-dup}\quad 
\tikzset{x=1em, y=2.1ex}
\InputIfFileExists{ax/one-twice.tikz}{}{\input{./tikz/ax/one-twice.tikz}}
\tikzset{x=1em, y=1.5ex}
\qquad 
\tikzset{x=1em, y=2.1ex}
\InputIfFileExists{ax/one-delete.tikz}{}{\input{./tikz/ax/one-delete.tikz}}
\tikzset{x=1em, y=1.5ex}
\;\myeq{1-del}\;\;
\tikzset{x=1em, y=2.1ex}
\InputIfFileExists{empty-diag.tikz}{}{\input{./tikz/empty-diag.tikz}}
\tikzset{x=1em, y=1.5ex}
\qquad 
\tikzset{x=1em, y=2.1ex}
\InputIfFileExists{ax/one-false.tikz}{}{\input{./tikz/ax/one-false.tikz}}
\tikzset{x=1em, y=1.5ex}
\;\myeq{$\varnothing$}\; 
\tikzset{x=1em, y=2.1ex}
\InputIfFileExists{one-false-disconnect.tikz}{}{\input{./tikz/one-false-disconnect.tikz}}
\tikzset{x=1em, y=1.5ex}
\qquad 
\tikzset{x=1em, y=2.1ex}
\InputIfFileExists{ax/0-leq-1.tikz}{}{\input{./tikz/ax/0-leq-1.tikz}}
\tikzset{x=1em, y=1.5ex}
\;\;\myeq{$0\leq 1$}\quad
\tikzset{x=1em, y=2.1ex}
\InputIfFileExists{empty-diag.tikz}{}{\input{./tikz/empty-diag.tikz}}
\tikzset{x=1em, y=1.5ex}

\end{equation*}
\vspace{2pt}
\hrule
\vspace{3pt}
\begin{equation*}
\Bunit \;\myeq{total} \quad 
\tikzset{x=1em, y=2.1ex}
\InputIfFileExists{ax/0-leq.tikz}{}{\input{./tikz/ax/0-leq.tikz}}
\tikzset{x=1em, y=1.5ex}
 \cup \; 
\tikzset{x=1em, y=2.1ex}
\InputIfFileExists{ax/0-geq.tikz}{}{\input{./tikz/ax/0-geq.tikz}}
\tikzset{x=1em, y=1.5ex}

\end{equation*}
\caption{Axioms of GPLA.}\label{fig:ih-pl}
\end{figure}

\begin{remark}\label{rmk:compact}
As a consequence of the Frobenius laws ($\bullet$-fr) and of (co)unitality ($\bullet$-un)-($\bullet$-coun), the diagrams $
\tikzset{x=1em, y=2.1ex}
\InputIfFileExists{cup-black-small-n.tikz}{}{\input{./tikz/cup-black-small-n.tikz}}
\tikzset{x=1em, y=1.5ex}
$ and $
\tikzset{x=1em, y=2.1ex}
\InputIfFileExists{cap-black-small-n.tikz}{}{\input{./tikz/cap-black-small-n.tikz}}
\tikzset{x=1em, y=1.5ex}
$ satisfy
\begin{equation}\label{eq:snake}

\tikzset{x=1em, y=2.1ex}
\InputIfFileExists{z-compact.tikz}{}{\input{./tikz/z-compact.tikz}}
\tikzset{x=1em, y=1.5ex}
 \,\pleq\, 
\tikzset{x=1em, y=2.1ex}
\begin{tikzpicture}
	\begin{pgfonlayer}{nodelayer}
		\node [style=none] (0) at (1.75, 0) {};
		\node [style=none] (1) at (-1.75, 0) {};
		\node [style=none] (2) at (1.25, 0.5) {\scriptsize $n$};
	\end{pgfonlayer}
	\begin{pgfonlayer}{edgelayer}
		\draw (0.center) to (1.center);
	\end{pgfonlayer}
\end{tikzpicture}
}
\tikzset{x=1em, y=1.5ex}
 \,\pleq\, 
\tikzset{x=1em, y=2.1ex}
\InputIfFileExists{s-compact.tikz}{}{\input{./tikz/s-compact.tikz}}
\tikzset{x=1em, y=1.5ex}

\end{equation}
for any $n$, the defining equations of a compact closed category. Intuitively, these allow us to forget the direction of wires, as long as we preserve the connectivity of the different components of a diagram. 
In addition, compactness implies the following proposition.
\end{remark}

\begin{proposition}\label{thm:duality}
$C\plleq D$ iff $C^{op}\plleq D^{op}$. 
\end{proposition}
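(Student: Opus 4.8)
\emph{Proof proposal.} The plan is to read off \cref{thm:duality} from the compact closed structure recorded in \cref{rmk:compact}: that structure makes the duality $(-)^{op}$ an operation \emph{internal} to the theory, definable by bending wires. Recall from \eqref{eq:flip} that the dual $d^{op}\from n\to m$ of a single diagram $d\from m\to n$ is obtained by whiskering $d$ with identity wires and pre/post-composing with the $m$-fold and $n$-fold caps and cups of \cref{rmk:compact}. Extending this elementwise, $\{c_1,\dots,c_k\}^{op}\df\{c_1^{op},\dots,c_k^{op}\}$, yields an operation $F$ on the homsets of $\freeUProp(\Sigma_\geq^+)$. Since $F$ is a composite of whiskerings $X\mapsto\id_a\oplus X\oplus\id_b$ and of pre/post-composition with fixed diagrams, and since $\oplus$ and $\poi$ are join-preserving (hence monotone) in any $\cup$-prop, $F$ distributes over $\cup$ and is monotone; moreover it is a congruence, so it descends to a monotone operation on $\freeUProp(\Sigma_\geq^+)_{/\IHpl}$.

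With this in hand, the left-to-right direction is immediate: if $C\plleq D$ then, applying $F$, $C^{op}=F(C)\plleq F(D)=D^{op}$. For the converse I would additionally show that $F$ is an involution up to provable equality, $(C^{op})^{op}\pleq C$. Indeed, each diagram of $F(F(c))$ is the diagram $c$ pre/post-composed with two caps and two cups; applying the snake equation \eqref{eq:snake} once on the $m$ boundary wires and once on the $n$ boundary wires straightens these and returns $c$, and this holds uniformly and compatibly with joins, giving $(C^{op})^{op}\pleq C$ and, symmetrically, $(D^{op})^{op}\pleq D$ in $\freeUProp(\Sigma_\geq^+)_{/\IHpl}$. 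Now from $C^{op}\plleq D^{op}$, applying the monotone $F$ once more gives $(C^{op})^{op}\plleq(D^{op})^{op}$, and chaining with the two equalities above yields $C\plleq D$. The two directions together say $F$ is a monotone involution on each homset, hence an order-isomorphism, which is exactly the claim.

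I expect no real obstacle: once \cref{rmk:compact} is available, the whole thing is a formal consequence of compactness and join-preservation. The one point that genuinely needs to be spelled out is that $F$, although defined ``diagram by diagram'', respects the semi-lattice structure --- that is, distributes over $\cup$ --- so that the argument legitimately takes place in the $\cup$-prop $\freeUProp(\Sigma_\geq^+)_{/\IHpl}$; as noted, this follows from the join-preservation of $\poi$ and $\oplus$ that is built into the notion of $\cup$-prop. One could also obtain \cref{thm:duality} from completeness together with the evident semantic equivalence $\sem{C}\subseteq\sem{D}\iff\sem{C^{op}}\subseteq\sem{D^{op}}$, but the direct syntactic argument is preferable since \cref{thm:duality} is cheap to prove on its own and need not wait for --- nor be entangled with --- the completeness theorem.
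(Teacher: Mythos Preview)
Your proposal is correct and is essentially the same approach as the paper's: the paper simply asserts that ``compactness implies'' \cref{thm:duality} without further detail, and what you have written is precisely the standard unpacking of that claim---$(-)^{op}$ is built from whiskerings and compositions (hence monotone and join-preserving), and the snake equations \eqref{eq:snake} make it an involution up to $\pleq$, so it is an order-isomorphism on homsets. Your additional remark that one should not derive this from completeness is well taken, since \cref{thm:duality} precedes \cref{thm:completeness} in the paper.
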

Another important property of compact closed category which we will exploit to simplify the completeness proof is stated in the following proposition. It is an immediate consequence of \eqref{eq:snake}.
\begin{proposition}\label{thm:map-state}
Given $C,D\from m\to n$, $C\plleq D$ iff $
\tikzset{x=1em, y=2.1ex}
\InputIfFileExists{c-effect.tikz}{}{\input{./tikz/c-effect.tikz}}
\tikzset{x=1em, y=1.5ex}
\; \plleq\; 
\tikzset{x=1em, y=2.1ex}
\InputIfFileExists{d-effect.tikz}{}{\input{./tikz/d-effect.tikz}}
\tikzset{x=1em, y=1.5ex}
$
\end{proposition}

\subsection{Completeness Theorem}\label{sec:completeness}

As we stated above,
the axioms in \cref{fig:ih-pl} form a complete theory for pl relations.
We will prove that claim in this section.
Without loss of generality, using \cref{thm:map-state},
we restrict to $n\to 0$ diagrams.

We start by defining appropriate normal forms for polyhedral and pl relations,
and then show that every diagram can be reduced to normal form.

\begin{definition}
    We call hyperplane a nonzero $n \to 1$ affine map $H$ which we write $
\tikzset{x=1em, y=2.1ex}
\begin{tikzpicture}
	\begin{pgfonlayer}{nodelayer}
		\node [style=reg] (9) at (1.75, 0) {$H$};
		\node [style=none] (14) at (2.75, 0) {};
		\node [style=none] (15) at (0.75, 0) {};
		\node [style=none] (20) at (1.75, 0) {};
	\end{pgfonlayer}
	\begin{pgfonlayer}{edgelayer}
		\draw [in=-180, out=0] (15.center) to (20.center);
		\draw (9) to (14);
	\end{pgfonlayer}
\end{tikzpicture}
}
\tikzset{x=1em, y=1.5ex}
$.
    A given hyperplane $H$ defines two half-spaces
    $
\tikzset{x=1em, y=2.1ex}
\InputIfFileExists{hyperplane-pos-halfspace.tikz}{}{\input{./tikz/hyperplane-pos-halfspace.tikz}}
\tikzset{x=1em, y=1.5ex}
$ and $
\tikzset{x=1em, y=2.1ex}
\InputIfFileExists{hyperplane-neg-halfspace.tikz}{}{\input{./tikz/hyperplane-neg-halfspace.tikz}}
\tikzset{x=1em, y=1.5ex}
$,
    as well as an affine subspace $
\tikzset{x=1em, y=2.1ex}
\begin{tikzpicture}
	\begin{pgfonlayer}{nodelayer}
		\node [style=reg] (9) at (1.75, 0) {$H$};
		\node [style=white] (14) at (2.75, 0) {};
		\node [style=none] (15) at (0.75, 0) {};
		\node [style=none] (20) at (1.75, 0) {};
	\end{pgfonlayer}
	\begin{pgfonlayer}{edgelayer}
		\draw [in=-180, out=0] (15.center) to (20.center);
		\draw (9) to (14);
	\end{pgfonlayer}
\end{tikzpicture}
}
\tikzset{x=1em, y=1.5ex}
$.
    Since inequality is not strict, the half-spaces include the affine subspace.
\end{definition}

In \cite[Theorem 14]{bonchiDiagrammaticPolyhedralAlgebra2021},
polyhedral relations have a normal form given by a set of inequations
of the form $A_i x + b_i \geq 0$.
In other words, the normal form is given by an intersection of half-spaces.
For our purposes we define a related but slightly different normal form.

\begin{definition}\label{def:poly-nf}
    A $\freeProp\Sigma_\geq^+$-diagram $d\from n \to 0$ is in polyhedral normal form if
    there are hyperplanes $H_i$ and diagrams
    $\circleeffect{d_i}\in\big\{\Wcounit\,, 
\tikzset{x=1em, y=2.1ex}
\begin{tikzpicture}
	\begin{pgfonlayer}{nodelayer}
		\node [style=geq] (0) at (0.75, 0) {};
		\node [style=white] (1) at (1.75, 0) {};
		\node [style=none] (2) at (-0.25, 0) {};
	\end{pgfonlayer}
	\begin{pgfonlayer}{edgelayer}
		\draw (1) to (0);
		\draw (2.center) to (0);
	\end{pgfonlayer}
\end{tikzpicture}
}
\tikzset{x=1em, y=1.5ex}
, 
\tikzset{x=1em, y=2.1ex}
\begin{tikzpicture}
	\begin{pgfonlayer}{nodelayer}
		\node [style=leq] (0) at (0.75, 0) {};
		\node [style=white] (1) at (1.75, 0) {};
		\node [style=none] (2) at (-0.25, 0) {};
	\end{pgfonlayer}
	\begin{pgfonlayer}{edgelayer}
		\draw (1) to (0);
		\draw (2.center) to (0);
	\end{pgfonlayer}
\end{tikzpicture}
}
\tikzset{x=1em, y=1.5ex}
\big\}$
    such that:
    \[
        \circleeffect{d} = 
\tikzset{x=1em, y=2.1ex}
\InputIfFileExists{poly-normal-form.tikz}{}{\input{./tikz/poly-normal-form.tikz}}
\tikzset{x=1em, y=1.5ex}

    \]

    Where the $d_i$ are minimal in the following sense:
    fixing the set of hyperplanes $H_i$,
    we consider all choices of $d_i$ that give $d$ when composed as above.
    We then require the $d_i$ in the normal form to be minimal (wrt the order of $\IHpoly$) among those.
    We call the set of the $d_i$ a \textit{valuation} for $d$ relative to the hyperplanes $H_i$.
\end{definition}

\begin{definition}
    We say that a morphism $D$ of $\freeUProp\Sigma_\geq^+$ is in pl normal form if it is written as a non-empty union of
    diagrams $d_i$ each in the language of $\freeProp\Sigma_\geq^+$ (i.e. without unions),
    the $d_i$ are in the normal form defined in \cref{def:poly-nf},
    and all the normal forms use the same set of hyperplanes.
\end{definition}

\begin{lemma}\label{thm:poly-nf}
    Every $d\from n \to 0$ in $\freeProp\Sigma_\geq^+$ has a polyhedral normal form.
\end{lemma}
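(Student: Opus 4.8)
The plan is to reduce to the known normal form for polyhedra and then appeal to completeness of $\IHpoly$. Since $\IHpoly$ axiomatises polyhedral relations~\cite[Corollary~25]{bonchiDiagrammaticPolyhedralAlgebra2021}, it is enough to exhibit, for a given $d\from n\to 0$, a $\freeProp\Sigma_\geq^+$-diagram $d'$ in polyhedral normal form with $\sem{d'}=\sem{d}$: then $d\polyeq d'$ and we are done. So the work is entirely in constructing $d'$.

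First I would unfold $\sem{d}$. By the characterisation of the image of $\sem{\cdot}$ recalled above, $\sem{d}=\{x\in\field^n\mid Ax+b\geq 0\}$ for some matrix $A$ with rows $a_1,\dots,a_k$ and some vector $b$; equivalently $\sem{d}=\bigcap_{i=1}^{k}\{x\mid a_i\cdot x+b_i\geq 0\}$. For each $i$ with $(a_i,b_i)\neq(0,0)$ let $H_i\from n\to 1$ be the affine map $x\mapsto a_i\cdot x+b_i$; this is a hyperplane (a constant one when $a_i=0$ and $b_i\neq 0$, which is exactly the case $\sem{d}=\varnothing$). Discard the indices with $a_i=0$ and $b_i\geq 0$, since those constraints are vacuous, and let $S$ be the set of surviving indices. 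Composing the diagram that copies the $n$-wire input into $|S|$ copies (built from black comultiplications and symmetries), then $\bigoplus_{i\in S}H_i$, then $\bigoplus_{i\in S}\tikzfig{geq-wcounit}$, gives a diagram of the shape required by \cref{def:poly-nf} with semantics $\bigcap_{i\in S}\{x\mid H_i(x)\geq 0\}=\sem{d}$. (When $S=\varnothing$ this is the $n$-ary black counit, with semantics $\field^n$.)

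It remains to obtain a \emph{minimal} valuation. Keeping the $H_i$ fixed, I would set $d_i:=\Wcounit$ for precisely the indices $i\in S$ at which $H_i$ vanishes identically on $\sem{d}$, leaving $d_i:=\tikzfig{geq-wcounit}$ otherwise, and call the resulting diagram $d'$. One checks $\sem{d'}=\sem{d}$: any $x\in\sem{d}$ has $H_i(x)=0$ at the indices where we switched to $\Wcounit$ and $H_i(x)\geq 0$ at all indices, so $x\in\sem{d'}$; conversely $\sem{d'}\subseteq\bigcap_{i\in S}\{x\mid H_i(x)\geq 0\}=\sem{d}$, since $\sem{\Wcounit}=\{0\}$ and $\sem{\tikzfig{geq-wcounit}}=\field_{\geq 0}$ are both contained in $\field_{\geq 0}$. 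For minimality, let $(d_i')_{i\in S}$ be any family of effects whose composite also has semantics $\sem{d}$: at an index with $d_i=\Wcounit$ there is nothing to prove, as $\Wcounit$ is the bottom of the three-element order of effects; at an index with $d_i=\tikzfig{geq-wcounit}$, the map $H_i$ does not vanish on $\sem{d}\subseteq\{x\mid H_i(x)\geq 0\}$, so there is $x_0\in\sem{d}$ with $H_i(x_0)>0$, and since $x_0$ lies in the $(d_i')$-composite we get $H_i(x_0)\in\sem{d_i'}$, forcing $\sem{d_i'}\neq\{0\}$ and hence $d_i'=\tikzfig{geq-wcounit}$. Thus $(d_i)_{i\in S}$ is the minimum valuation for the $H_i$, $d'$ is in polyhedral normal form, and completeness of $\IHpoly$ gives $d\polyeq d'$.

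The one delicate point is the bookkeeping around constant rows $a_i=0$: they are exactly what is needed to present $\varnothing$ (and, by omission of all rows, all of $\field^n$), and one must check the minimality recipe still makes sense there. For $\sem{d}=\varnothing$ every $H_i$ vanishes on $\sem{d}$ vacuously, so the recipe turns every $d_i$ into $\Wcounit$, yet the composite is still $\varnothing$ because $\bigcap_{i\in S}\{x\mid H_i(x)=0\}\subseteq\bigcap_{i\in S}\{x\mid H_i(x)\geq 0\}=\varnothing$; for $\sem{d}=\field^n$ there are no hyperplanes and the empty valuation is trivially minimal. Everything else is the routine translation between the matrix presentation $Ax+b\geq 0$ of~\cite{bonchiDiagrammaticPolyhedralAlgebra2021} and the shape of \cref{def:poly-nf} (copy the input, apply each $H_i$, then constrain), which could also be carried out as a direct diagrammatic rewrite rather than via completeness.
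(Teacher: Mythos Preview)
Your proof is correct but takes a different path from the paper's. Both arguments start from the $Ax+b\geq 0$ shape (the paper citing \cite[Theorem~14]{bonchiDiagrammaticPolyhedralAlgebra2021} directly, you via the semantic characterisation plus completeness of $\IHpoly$), and both then focus on obtaining the minimal valuation. The paper does this abstractly: it observes that the componentwise intersection of two valuations for $d$ (relative to the same hyperplanes) is again a valuation for $d$, so the finite set of valuations is closed under meets and the minimum is simply the intersection of them all. You instead give an explicit recipe---set $d_i:=\Wcounit$ exactly when $H_i$ vanishes identically on $\sem{d}$---and verify directly that it is the minimum. Your approach is more concrete and produces a usable formula for the minimal valuation; the paper's lattice-theoretic argument is shorter, makes no reference to the ambient geometry, and applies uniformly regardless of which of the three effects appear in the starting valuation.

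One small exposition gap: when you write ``forcing $\sem{d_i'}\neq\{0\}$ and hence $d_i'=\tikzfig{geq-wcounit}$'' you have only ruled out $\Wcounit$, not $\tikzfig{leq-wcounit}$. The fix is immediate---you already have $H_i(x_0)>0$ with $H_i(x_0)\in\sem{d_i'}$, which also excludes $\field_{\leq 0}$---but as written the sentence does not quite close the case.
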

\begin{proof}
    The normal form from \cite[Theorem 14]{bonchiDiagrammaticPolyhedralAlgebra2021} already has the right shape.
    We only need to find a minimal valuation.
    Observe that the intersection of two valuations for $d$ is again a valuation for $d$:
    let $v$ and $v'$ be two valuations for $d$ relative to the hyperplanes $H_i$.
    If we write $\scalar{A} := 
\tikzset{x=1em, y=2.1ex}
\InputIfFileExists{intersect-hyperplanes.tikz}{}{\input{./tikz/intersect-hyperplanes.tikz}}
\tikzset{x=1em, y=1.5ex}
$
    then $
\tikzset{x=1em, y=2.1ex}
\InputIfFileExists{valuation-intersect.tikz}{}{\input{./tikz/valuation-intersect.tikz}}
\tikzset{x=1em, y=1.5ex}
 = 
\tikzset{x=1em, y=2.1ex}
\InputIfFileExists{valuation-intersect-1.tikz}{}{\input{./tikz/valuation-intersect-1.tikz}}
\tikzset{x=1em, y=1.5ex}
 = 
\tikzset{x=1em, y=2.1ex}
\InputIfFileExists{valuation-intersect-2.tikz}{}{\input{./tikz/valuation-intersect-2.tikz}}
\tikzset{x=1em, y=1.5ex}
 = \circleeffect{d}$

    Therefore $v \cap v'$ is again a valuation for $d$.
    Since there are finitely many valuations, we construct the minimal one by intersecting them all.
    \qed
\end{proof}

\begin{lemma}\label{lem:add-hyperplane}
    If a morphism $D$ of $\freeUProp\Sigma_\geq^+$ is in pl normal form and $H$ is a hyperplane,
    there exists $C$ in pl normal form such that $D \pleq C$ and
    $\Hyperplanes(C) = \Hyperplanes(D) \cup \{H\}$.
\end{lemma}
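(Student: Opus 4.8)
The plan is to split each polyhedral summand $d_i$ of $D$ along the new hyperplane $H$, replacing it by the union of its intersection with the half-space $H \geq 0$ and its intersection with $H \leq 0$, and then collect all the resulting summands into a single pl normal form over the hyperplane set $\Hyperplanes(D)\cup\{H\}$. The semantic intuition is transparent: a polyhedron $P$ equals $(P\cap\{H\geq 0\})\cup(P\cap\{H\leq 0\})$ because the two half-spaces cover the whole space and overlap on $\{H=0\}$; diagrammatically this overlap is exactly what lets us use the totality axiom. So the first step is to prove the diagrammatic identity that, for any hyperplane $H$ and any $d_i$ in polyhedral normal form, $\circleeffect{d_i} \pleq \circleeffect{d_i^+}\cup\circleeffect{d_i^-}$, where $d_i^\pm$ is $d_i$ with the extra constraint block $\tikzfig{geq-wcounit}$ resp. $\tikzfig{leq-wcounit}$ attached through $H$. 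This follows by feeding a copy of the left port of $H$ into $\Bunit$ and rewriting $\Bunit \;\myeq{total}\; \tikzfig{ax/0-leq}\cup\tikzfig{ax/0-geq}$, in the manner illustrated in the Remark on reasoning in $\cup$-props — the common context $C[-]$ being the whole of $d_i$ with a dangling wire carrying the value of $H$.

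Next I would apply this split to every summand of $D$ simultaneously. Writing $D = \bigcup_i d_i$, distributivity of $\cup$ (which holds in every $\cup$-prop, and in particular in $\freeUProp{\Sigma_\geq^+}_{/\IHpl}$) gives $D \pleq \bigcup_i (d_i^+\cup d_i^-)$, and the right-hand side is a non-empty union of $\freeProp{\Sigma_\geq^+}$-diagrams, each of which now mentions the hyperplanes $\Hyperplanes(D)\cup\{H\}$. The only remaining issue is that this union need not yet be a pl normal form in the strict sense of the definition: each $d_i^\pm$ must be brought back to polyhedral normal form (Definition \ref{def:poly-nf}) — i.e. we must re-derive a minimal valuation for it relative to the hyperplane set $\Hyperplanes(D)\cup\{H\}$. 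But $d_i^\pm$ is a concrete $\freeProp{\Sigma_\geq^+}$-diagram, so Lemma \ref{thm:poly-nf} applies and produces exactly such a normal form; since $d_i^\pm$ is built from $d_i$ by conjoining a constraint on $H$, and $d_i$ already had all of $\Hyperplanes(D)$ in its valuation, the hyperplane set of the resulting normal form is contained in $\Hyperplanes(D)\cup\{H\}$, and by padding the valuation with the trivial constraint $\Wcounit$ on any missing hyperplane (which changes nothing, since $\Wcounit$ is the unit for $\cap$ up to the relevant axioms) we can assume it is exactly $\Hyperplanes(D)\cup\{H\}$. Calling the resulting union $C$, we have $D \pleq C$ with $\Hyperplanes(C)=\Hyperplanes(D)\cup\{H\}$, as required.

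The step I expect to need the most care is the very first one: proving the single-summand split $\circleeffect{d_i}\pleq\circleeffect{d_i^+}\cup\circleeffect{d_i^-}$ rigorously inside the hybrid diagram/join calculus. One has to exhibit the copy of $H$'s value as a genuine subdiagram so that the totality axiom can be applied in a well-defined context, and then use compactness (Remark \ref{rmk:compact}) and the bimonoid/Frobenius laws to reabsorb the copied wire and the extra $\Wcounit$'s so that the two sides of the join are literally $d_i$ with the half-space constraint glued on. The padding argument — that adding a hyperplane to a valuation with the trivial constraint does not change the denoted diagram — is routine but also needs the minimality clause of Definition \ref{def:poly-nf} to be handled honestly, since strictly speaking the padded valuation may fail minimality; this is harmless because we only claim $D\pleq C$ and may freely replace $C$'s valuations by their minimal representatives afterwards, exactly as in the proof of Lemma \ref{thm:poly-nf}.
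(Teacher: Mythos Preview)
Your proposal is correct and follows essentially the same route as the paper: split each summand $d_i$ of $D$ along $H$ into $d_i^+$ and $d_i^-$ using the totality axiom, then reduce each piece back to polyhedral normal form via Lemma~\ref{thm:poly-nf} so that the union is a pl normal form over $\Hyperplanes(D)\cup\{H\}$. The paper's proof is terser---it simply defines $C$ as the union of all the split pieces, invokes totality once to show $C \pleq D$, and asserts that reducing each piece to polyhedral normal form preserves the hyperplane set---whereas you spell out the padding and minimality bookkeeping more carefully, but the underlying argument is the same.
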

\begin{proof}
    We write the normal form of $D$ as $D = \bigcup_i d_i$.
    Define $C$ to be the following morphism:
    \[
        \circleeffect{C} = \bigcup_i 
\tikzset{x=1em, y=2.1ex}
\InputIfFileExists{Y+.tikz}{}{\input{./tikz/Y+.tikz}}
\tikzset{x=1em, y=1.5ex}
 \cup\; \bigcup_i 
\tikzset{x=1em, y=2.1ex}
\InputIfFileExists{Y-.tikz}{}{\input{./tikz/Y-.tikz}}
\tikzset{x=1em, y=1.5ex}

    \]

    We transform $C$ into $C'$ by reducing all the terms in the union to
    polyhedral normal form.
    This makes $C'$ be in pl normal form.
    Since we add the same hyperplane $H$ to all $d_i$,
    $\Hyperplanes(C') = \Hyperplanes(D) \cup \{H\}$.

    Moreover:
    \begin{equation*}
    \circleeffect{C'}
    = \circleeffect{C}
        = 
\tikzset{x=1em, y=2.1ex}
\InputIfFileExists{Y+UY-.tikz}{}{\input{./tikz/Y+UY-.tikz}}
\tikzset{x=1em, y=1.5ex}

        \myeq{total}  
\tikzset{x=1em, y=2.1ex}
\InputIfFileExists{Y+UY-tot.tikz}{}{\input{./tikz/Y+UY-tot.tikz}}
\tikzset{x=1em, y=1.5ex}

        = \circleeffect{D}
    \end{equation*}
    \qed
\end{proof}

\begin{theorem}\label{lem:pl-nf}
    Every morphism of $\freeUProp\Sigma_\geq^+$ has a pl normal form.
\end{theorem}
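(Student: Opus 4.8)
The plan is to obtain the normal form by composing the two preceding lemmas: \cref{thm:poly-nf}, which normalises a single union-free diagram, and \cref{lem:add-hyperplane}, which enlarges the hyperplane set of a pl normal form without changing the relation it presents. By the standing reduction to effects (\cref{thm:map-state}), take a morphism $D\colon n\to 0$ of $\freeUProp\Sigma_\geq^+$ and fix a presentation $D = \bigcup_{i=1}^{k} d_i$ as a finite, non-empty union of union-free diagrams $d_i\colon n\to 0$ of $\freeProp\Sigma_\geq^+$. By \cref{thm:poly-nf}, each $d_i$ has a polyhedral normal form $d_i'$ over a finite hyperplane set $\mathcal H_i$, with $d_i \pleq d_i'$; since $\cup$ is compatible with $\pleq$ (composition and monoidal product distribute over $\cup$, so the quotient is a $\cup$-prop), substituting componentwise gives $D \pleq \bigcup_i d_i'$, and each singleton $d_i'$ is, by construction, in pl normal form with $\Hyperplanes(d_i') = \mathcal H_i$.

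Next I would equalise the hyperplane sets across the $k$ components. Put $\mathcal H := \bigcup_{i=1}^{k}\mathcal H_i$, a finite set. For each $i$, enumerate the finitely many hyperplanes of $\mathcal H \setminus \mathcal H_i$ and apply \cref{lem:add-hyperplane} once for each of them in turn, starting from the pl normal form $d_i'$. Since \cref{lem:add-hyperplane} returns a morphism provably equal in $\IHpl$ to its input (that is what the symbol $\pleq$ in its statement records) and grows the hyperplane set by exactly the added hyperplane, after finitely many steps we obtain $C_i$ in pl normal form with $C_i \pleq d_i'$ and $\Hyperplanes(C_i) = \mathcal H$. Writing $C_i = \bigcup_j c_{ij}$ with each $c_{ij}$ in polyhedral normal form over $\mathcal H$, the morphism $\bigcup_i C_i = \bigcup_{i,j} c_{ij}$ is a non-empty union of polyhedral normal forms all sharing the hyperplane set $\mathcal H$ --- hence in pl normal form --- and $\bigcup_i C_i \pleq \bigcup_i d_i' \pleq D$. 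Thus $\bigcup_i C_i$ is a pl normal form of $D$.

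I do not expect a genuine obstacle here: the real work is already done in \cref{thm:poly-nf} (existence of polyhedral normal forms) and in \cref{lem:add-hyperplane} (where the \emph{total} axiom is actually used), and the present statement is essentially the observation that these two results compose. The only point requiring care is the order of operations: each component of the union must be brought to polyhedral normal form \emph{first}, so that \cref{lem:add-hyperplane} --- which presupposes a diagram already in normal form --- applies, and only afterwards can the per-component hyperplane sets be synchronised; termination of the synchronisation is immediate from the finiteness of $\mathcal H$, and the closure of pl normal forms under unions with a common hyperplane set is immediate from the definition.
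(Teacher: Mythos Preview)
Your proposal is correct and follows essentially the same approach as the paper: write $D$ as a finite union of union-free diagrams, normalise each component via \cref{thm:poly-nf}, then iterate \cref{lem:add-hyperplane} to synchronise all components to the common hyperplane set $\mathcal H=\bigcup_i\mathcal H_i$. You are in fact slightly more careful than the paper in noting that each $C_i$ produced by \cref{lem:add-hyperplane} may itself be a union, so the final normal form is the double union $\bigcup_{i,j} c_{ij}$.
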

\begin{proof}
    Let $D$ be a $n \to 0$ morphism of $\freeUProp\Sigma_\geq^+$.
    First using distributivity of the union over sequential and parallel composition,
    we move all the uses of the union to the top-level.

    Thus $D$ is written $\bigcup_i d_i$ where each $d_i$ doesn't use the union, i.e. is
    in the language of $\freeProp\Sigma_\geq^+$.
    We then rewrite each $d_i$ into polyhedral normal form using \cref{thm:poly-nf}.

    Each $d_i$ is thus also individually in \emph{pl normal form},
    so we can use \cref{lem:add-hyperplane} to add to each $d_i$ all the hyperplanes of the other $d_j$.
    For each $i$ we get a new diagram $d'_i \pleq d_i$ in pl normal form,
    and all the $d'_i$ use the same set of hyperplanes.
    So $\bigcup_i d'_i$ is a pl normal form for $D$.
    \qed
\end{proof}

Before we can prove completeness, we need a final notion:
the interior of a polyhedral relation,
which is the set of its points that don't touch any of its faces.

\begin{definition}
    Let $d$ be morphism in polyhedral normal form.
    We define $\Interior(d)$ to be the set of points $x \in \sem{d}$ for which $H_i(x) \neq 0$ when
    $\circleeffect{d_i} \neq \Wcounit$.
    In other words, $H_i(x)$ is nonzero for all hyperplanes where it can be nonzero without $x$ leaving $\sem{d}$.
\end{definition}

Note that we define $\Interior$ only on polyhedral normal form diagrams.
$\Interior$ appears to be representation-independent at least when $\field = \R$,
but we won't try to prove it in the general case
as we don't need this here.

\begin{remark}
    This is not the usual topological notion of interior.
    In particular, this notion is independent from the dimension of the surrounding space:
    a polyhedron of dimension $0 < k < n$ within $\R^n$ has an empty topological interior
    but a nonempty $\Interior$, as we'll see in the next theorem.
    $\Interior(d)$ instead coincides with the interior of $d$ with the topology of
    the smallest containing affine space.
\end{remark}

\begin{lemma}\label{thm:int-nonempty}
    Let $d$ be a diagram in polyhedral normal form. If $\sem{d}$ is nonempty, then $\Interior(d)$ is nonempty.
\end{lemma}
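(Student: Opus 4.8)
The plan is to combine the \textbf{minimality} of the valuation with the \textbf{convexity} of $\sem{d}$. Write the normal form of $d$ with hyperplanes $H_1,\dots,H_N$ and effects $d_1,\dots,d_N$, so that $\sem{d}$ is the set of $x$ with $H_i(x)=0$ when $d_i=\Wcounit$, with $H_i(x)\geq 0$ when $d_i$ is the $\geq$-effect, and with $H_i(x)\leq 0$ when $d_i$ is the $\leq$-effect. In particular $\sem{d}$ is convex, being an intersection of affine subspaces and half-spaces. Let $J$ be the set of indices $j$ with $d_j\neq\Wcounit$; if $J=\varnothing$ then $\Interior(d)=\sem{d}$ and there is nothing to prove, so assume $J\neq\varnothing$.

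First I would extract, for each $j\in J$, a point $x_j\in\sem{d}$ with $H_j(x_j)\neq 0$. This is where minimality enters: replacing $d_j$ by $\Wcounit$ while leaving the other effects untouched yields a strictly smaller valuation in the componentwise order of $\IHpoly$, hence (by the minimality clause of \cref{def:poly-nf}) a diagram $d'$ whose semantics is \emph{strictly} contained in $\sem{d}$. Since $d'$ has the same constraints as $d$ except the stronger equality $H_j=0$, any point in $\sem{d}\setminus\sem{d'}$ lies in $\sem{d}$ yet fails $H_j=0$; take this to be $x_j$. As $x_j\in\sem{d}$, the value $H_j(x_j)$ in fact carries the strict sign prescribed by $d_j$ (positive for the $\geq$-effect, negative for the $\leq$-effect).

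Then I would take the barycenter $\bar x := \tfrac{1}{|J|}\sum_{j\in J} x_j$, which lies in $\sem{d}$ by convexity (and is well defined since $|J|\neq 0$ in an ordered field). Because every $H_l$ is affine, $H_l(\bar x)=\tfrac{1}{|J|}\sum_{j\in J}H_l(x_j)$. For $l\in J$, each summand $H_l(x_j)$ has the weak sign prescribed by $d_l$ since $x_j\in\sem{d}$, and the summand $H_l(x_l)$ has that sign strictly, so the whole sum is nonzero: $H_l(\bar x)\neq 0$. For $l\notin J$ we have $H_l(\bar x)=0$ automatically from $\bar x\in\sem{d}$. Hence $\bar x$ satisfies exactly the conditions defining $\Interior(d)$, so $\Interior(d)\neq\varnothing$.

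The only delicate step is the first one: turning the global minimality of the valuation into the local fact that toggling a single half-space effect to $\Wcounit$ strictly shrinks the semantics. This relies on valuations being compared componentwise, so that decreasing one coordinate genuinely decreases the valuation and the minimality hypothesis applies. Everything else is the standard argument that a nonempty polyhedron has nonempty (relative) interior, which goes through verbatim over any ordered field.
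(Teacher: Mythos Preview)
Your proof is correct and follows essentially the same route as the paper: use minimality of the valuation to extract, for each non-equality constraint, a witness point with strict inequality, then average these witnesses and use convexity plus affinity of the $H_i$ to conclude. The only cosmetic differences are that the paper first negates hyperplanes to reduce to the case where all non-$\Wcounit$ effects are $\geq$-effects, whereas you handle both signs directly, and you spell out the minimality step (toggling one component down cannot preserve the semantics) more explicitly than the paper does.
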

\begin{proof}
    First, write $d$ in polyhedral normal form $d = \bigcap_i 
\tikzset{x=1em, y=2.1ex}
\begin{tikzpicture}
	\begin{pgfonlayer}{nodelayer}
		\node [style=none] (2) at (-1.5, 0) {};
		\node [style=reg] (9) at (-0.25, 0) {$H_i$};
		\node [style=place] (28) at (1.5, 0) {$d_i$};
	\end{pgfonlayer}
	\begin{pgfonlayer}{edgelayer}
		\draw (9) to (28);
		\draw (2.center) to (9);
	\end{pgfonlayer}
\end{tikzpicture}
}
\tikzset{x=1em, y=1.5ex}
$.
    Up to negating some of the $H_i$, we can assume that
    none of the $\circleeffect{d_i}$ are $
\tikzset{x=1em, y=2.1ex}
}
\tikzset{x=1em, y=1.5ex}
$.
    If $\forall i. \circleeffect{d_i} = \Wcounit$,
    then by definition $\Interior(d) = \sem{d}$ which is nonempty so we're done.
    Assume then that $\circleeffect{d_i} = 
\tikzset{x=1em, y=2.1ex}
}
\tikzset{x=1em, y=1.5ex}
$ for at least some $i$.
    For each such $i$,
    by minimality of the $d_i$ in the normal form
    there must be a $x_i \in \sem{d}$ such that $H_i(x_i) > 0$.
    We pick such an $x_i$ for each $i$, and define $x$ to be their average.
    By convexity, $x \in \sem{d}$.
    Then for each $i$ either $\circleeffect{d_i} = \Wcounit$
    or $H_i(x) \geq H_i(x_i)/\lvert\{x_j\}_j\rvert > 0$,
    hence $x \in \Interior(d)$.
    \qed
\end{proof}

\begin{theorem}[Completeness]\label{thm:completeness}
    $\sem{D} \subseteq \sem{C} \implies D \plleq C$
\end{theorem}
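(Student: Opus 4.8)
The plan is to reduce the statement to single polyhedral diagrams living over a common hyperplane arrangement, and then appeal to completeness of $\IHpoly$. First I would use \cref{thm:map-state} (together with the fact that $\sem{\cdot}$ commutes with the wire-bending of \cref{rmk:compact}, so that inclusions of semantics are preserved in both directions) to assume $D,C\from n\to 0$. Then I would put $D$ and $C$ in pl normal form via \cref{lem:pl-nf}, and apply \cref{lem:add-hyperplane} repeatedly to each of them --- recall its proof actually establishes provable equality, not just $\plleq$ --- so that both normal forms use the \emph{same} hyperplane set $\mathcal H=\{H_1,\dots,H_k\}$. Write $D=\bigcup_i d_i$ and $C=\bigcup_j c_j$ with each $d_i,c_j$ a polyhedral normal form over $\mathcal H$. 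Since $\plleq$ is the order of a join-semilattice and $D$ is the join of the $d_i$, it suffices to prove $d_i\plleq C$ for each fixed $i$. Here I would record the bookkeeping: a polyhedral normal form over $\mathcal H$ assigns to each $H_\ell$ an effect in $\{\Wcounit,\tikzfig{geq-wcounit},\tikzfig{leq-wcounit}\}$, which partitions $\{1,\dots,k\}$ into $Z_i\sqcup S_i\sqcup T_i$ (equality, positive half-space, negative half-space), with $\sem{d_i}$ exactly the set of $x$ satisfying $H_\ell(x)=0$ on $Z_i$, $H_\ell(x)\ge 0$ on $S_i$, $H_\ell(x)\le 0$ on $T_i$; likewise for $c_j$.

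For the main step, fix $i$. If $\sem{d_i}=\varnothing$ I would pick any $c_{j_0}$ in the (nonempty) union $C$, observe $\sem{d_i}\subseteq\sem{c_{j_0}}$ trivially, invoke completeness of $\IHpoly$~\cite[Corollary 25]{bonchiDiagrammaticPolyhedralAlgebra2021} to get $d_i\polyleq c_{j_0}$, and conclude $d_i\plleq c_{j_0}\plleq C$. Otherwise I would use \cref{thm:int-nonempty} to choose a point $x_0\in\Interior(d_i)$; by the definition of $\Interior$ together with $x_0\in\sem{d_i}$, this $x_0$ satisfies $H_\ell(x_0)=0$ on $Z_i$, $H_\ell(x_0)>0$ on $S_i$, $H_\ell(x_0)<0$ on $T_i$, so its sign pattern relative to $\mathcal H$ is completely determined. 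Since $\sem{d_i}\subseteq\sem{D}\subseteq\sem{C}=\bigcup_j\sem{c_j}$, some $c_j$ contains $x_0$; membership of $x_0$ in $\sem{c_j}$ then forces $Z_j\subseteq Z_i$ (from the $=0$ constraints), $S_j\subseteq Z_i\cup S_i$ (a $\ge 0$ constraint cannot hold at $x_0$ on $T_i$), and symmetrically $T_j\subseteq Z_i\cup T_i$.

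From these three inclusions it is immediate that $\sem{d_i}\subseteq\sem{c_j}$: any $x\in\sem{d_i}$ satisfies $c_j$'s defining system, checking each of $Z_j,S_j,T_j$ against the corresponding part of $Z_i\sqcup S_i\sqcup T_i$. Completeness of $\IHpoly$ applied to the polyhedral diagrams $d_i,c_j$ then gives $d_i\polyleq c_j$; and since the axioms of $\IHpoly$ sit inside $\IHpl$ (the first four blocks of \cref{fig:ih-pl}, with its inequality axioms read as the corresponding $\cup$-equations), this derivation lifts to $d_i\plleq c_j$ in $\IHpl$, whence $d_i\plleq c_j\plleq C$. As $i$ was arbitrary, $D=\bigcup_i d_i\plleq C$. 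The genuinely load-bearing idea here is not any single calculation but the passage from $\sem{d_i}\subseteq\bigcup_j\sem{c_j}$ to $\sem{d_i}\subseteq\sem{c_j}$ for one particular $j$: this is false for arbitrary finite unions of polyhedra, and the reason it works in our setting is precisely that all the $d_i$ and $c_j$ live over one shared hyperplane arrangement, so a single interior point of $\sem{d_i}$ already determines a sign pattern forcing the inclusion --- which is exactly why \cref{lem:add-hyperplane} and \cref{thm:int-nonempty} were arranged in advance. The remaining work (the wire-bending reduction and the $\IHpoly\hookrightarrow\IHpl$ lift) is routine but needs to be stated carefully.
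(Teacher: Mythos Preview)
Your proposal is correct and follows essentially the same route as the paper's proof: reduce to $n\to 0$ via \cref{thm:map-state}, put both sides in pl normal form over a shared hyperplane set using \cref{lem:pl-nf} and \cref{lem:add-hyperplane}, pick an interior point of each nonempty $d_i$ via \cref{thm:int-nonempty} to locate a single $c_j$ containing it, and derive $d_i \plleq c_j$. The only minor variation is that the paper, rather than invoking $\IHpoly$-completeness as a black box for the nonempty case, directly checks the componentwise ordering $d_{ik}\plleq c_{jk}$ of the three possible effects and concludes $d_i\plleq c_j$ by monotonicity of the normal-form intersection---your sign-pattern analysis $Z_j\subseteq Z_i$, $S_j\subseteq Z_i\cup S_i$, $T_j\subseteq Z_i\cup T_i$ is exactly that componentwise comparison unpacked, so the two arguments are interchangeable.
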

\begin{proof}
    Using \cref{thm:map-state} we can without loss of generality assume
    that $D$ and $C$ have $n$ inputs and $0$ outputs.
    Using \cref{lem:pl-nf}, we reduce $D$ and $C$ into pl normal form.
    Using \cref{lem:add-hyperplane}, we add each others' hyperplanes to $D$ and $C$ so that
    they both use the exact same set.
    So $D = \bigcup_i d_i$ and $C = \bigcup_i c_i$, where the $d_i$ and $c_i$ are in polyhedral normal form and use a same set of hyperplanes $\{H_i\}_i$.
    Pick one of the $d_i$ in $D$.

    If $d_i$ is the empty polyhedron, we have $\sem{d_i} = \emptyset \subseteq \sem{c_0}$,
    so by completeness of $\IHpoly$ we get $d_i \polyleq c_0$.
    Thus $d_i \plleq c_0 \plleq C$.

    Otherwise $d_i$ is nonempty, and using \cref{thm:int-nonempty} we pick $x \in \Interior(d_i)$.
    Then:
    \[
    x \in \Interior(d_i) \subseteq \sem{d_i} \subseteq \sem{D} \subseteq \sem{C} = \sem{\bigcup_j c_j} = \bigcup_j \sem{c_j}
    \]
    Thus there is a $j$ such that $x \in \sem{c_j}$.
    Now pick a $k$.
    If $\circleeffect{d_{ik}} = \Wcounit$, then $\circleeffect{d_{ik}} \plleq \;\circleeffect{c_{jk}}$
    regardless of $\circleeffect{c_{jk}}$.
    If $\circleeffect{d_{ik}} = 
\tikzset{x=1em, y=2.1ex}
}
\tikzset{x=1em, y=1.5ex}
$,
    then by definition of $\Interior(d_i)$, we have $H_k(x) > 0$.
    Since moreover $x \in \sem{c_j}$, $\circleeffect{c_{jk}}$ must be $
\tikzset{x=1em, y=2.1ex}
}
\tikzset{x=1em, y=1.5ex}
$.
    If $\circleeffect{d_{ik}} = 
\tikzset{x=1em, y=2.1ex}
}
\tikzset{x=1em, y=1.5ex}
$,
    similarly $\circleeffect{c_{jk}}$ must be $
\tikzset{x=1em, y=2.1ex}
}
\tikzset{x=1em, y=1.5ex}
$.
    In all three cases, $\circleeffect{d_{ik}} \plleq \;\circleeffect{c_{jk}}$.
    This is the case for every $k$, so:
    \[\circleeffect{d_i}\; = \; 
\tikzset{x=1em, y=2.1ex}
\InputIfFileExists{X-i-hyperplanes.tikz}{}{\input{./tikz/X-i-hyperplanes.tikz}}
\tikzset{x=1em, y=1.5ex}
\;\subseteq \; 
\tikzset{x=1em, y=2.1ex}
\InputIfFileExists{Y-j-hyperplanes.tikz}{}{\input{./tikz/Y-j-hyperplanes.tikz}}
\tikzset{x=1em, y=1.5ex}
 = \; \circleeffect{c_j} \;\subseteq \; \circleeffect{C}\]
    Finally, since we have $d_i \plleq C$ for all $i$,
    we derive $D = \bigcup_i d_i \plleq C$.
    \qed
\end{proof}

\section{Generating Piecewise-Linear Relations}\label{sec:alternative-syntax}


Piecewise-linear subsets of vector spaces give us a rather wide semantic space to explore. 
One might suspect that there exist useful structured relations 
that live strictly between the linear and piecewise-linear worlds. 

Formally, we're interested in finding sub-props of $\RelX{\field}$ that
contains not only linear or polyhedral relations, but some selected non-convex relations
that would be useful for particular applications.
It turns out that for many sensible choices,
the resulting image will coincide with pl relations---a somewhat surprising fact.
Note that we are interested in generating sub-props of $\RelX{\field}$ here, not $\cup$-props, since the $\cup$-prop generated by the image of $\freeUProp\Sigma_\geq^+$ under $\sem{\cdot}$ already contains all pl relations. 

We will go through a few natural choices, each time defining them as a term of $\freeUProp\Sigma_\geq^+$, a shortcut which makes reasoning about them much easier than with their set-theoretic semantics. Of course, their semantics in $\RelX{\field}$ can be recovered via $\sem{\cdot}$.

\subsection{The $n$-Fold Union Generators}

We first show that the main difference between polyhedral and pl relations
---the unions---can be bridged.
Indeed, it is not obvious that we can build arbitrary unions of diagrams
without having access to the syntax of a SMLT.
For this we introduce a family of diagrams we call
the $n$-fold union generators, defined for a given 
$n$ as:
\[
\tikzset{x=1em, y=2.1ex}
\InputIfFileExists{union-n.tikz}{}{\input{./tikz/union-n.tikz}}
\tikzset{x=1em, y=1.5ex}
 \; :=\; 
\tikzset{x=1em, y=2.1ex}
\InputIfFileExists{union-left.tikz}{}{\input{./tikz/union-left.tikz}}
\tikzset{x=1em, y=1.5ex}
 \;\cup \;
\tikzset{x=1em, y=2.1ex}
\InputIfFileExists{union-right.tikz}{}{\input{./tikz/union-right.tikz}}
\tikzset{x=1em, y=1.5ex}
\]

\noindent This family of generators suffices to reproduce the behaviour of the syntactic union:
\begin{theorem}
    The image of the free prop generated by $\Sigma_\geq^+$ and the $n$-fold union generators for all $n$ is the prop of pl relations.
\end{theorem}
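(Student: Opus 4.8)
The plan is to prove the two inclusions separately, with the only real work in one of them. For the inclusion of the image into the prop of pl relations, I would show that finite unions of polyhedral relations form a sub-prop of $\RelX{\field}$ containing the semantics of every generator. Identities and symmetries are linear, hence polyhedral; the generators in $\Sigma_\geq^+$ have polyhedral semantics by the results recalled in \cref{sec:polyhedra}; and each $n$-fold union generator is, by its defining equation, a term of $\freeUProp\Sigma_\geq^+$ whose semantics is a union of two polyhedral relations. It then remains to check closure of pl relations under $\poi$ and $\oplus$. By distributivity of union over both composition operators (e.g. $(\bigcup_i P_i)\poi(\bigcup_j Q_j) = \bigcup_{i,j}(P_i\poi Q_j)$, and likewise for $\oplus$) this reduces to the analogous closure properties of polyhedral relations, which hold because they already form a prop \cite{bonchiDiagrammaticPolyhedralAlgebra2021}. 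Hence $\mathsf{Im}(\sem{\cdot})$ is contained in the prop of pl relations.

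For the converse, let $R\from m\to n$ be a pl relation, say $R = P_1\cup\dots\cup P_k$ with each $P_i$ polyhedral. If every $P_i$ is empty then $R$ is itself the empty polyhedron, which already lies in the image of $\sem{\cdot}\from\freeProp\Sigma_\geq^+\to\RelX{\field}$; so we may discard the empty summands and assume each $P_i$ is nonempty. Since the image of GPA's semantics is exactly the polyhedral relations, pick for each $i$ a diagram $d_i$ of $\freeProp\Sigma_\geq^+$ with $\sem{d_i} = P_i$; if $k = 1$ we are done, so assume $k\geq 2$. Write $U_\ell$ for the $\ell$-fold union generator, a morphism $2\ell\to\ell$ with semantics $\{((a,b),a)\mid a,b\in\field^\ell\}\cup\{((a,b),b)\mid a,b\in\field^\ell\}$.

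The crux is assembling the $d_i$ into one term over $\Sigma_\geq^+$ together with the $U_\ell$. The naive attempt---copy inputs with $\Bcomult$, run $d_i$ and $d_j$ in parallel, then merge outputs with $U_n$---does \emph{not} compute $P_i\cup P_j$, because merging forces the two branches to share a common domain. The fix is to move to the compact-closed structure of \cref{rmk:compact}: the black cups and caps are definable from $\Sigma_\geq^+$ alone (from $\Bunit,\Bcomult$, resp. $\Bmult,\Bcounit$), so each $d_i$ bends to a state $\hat d_i\from 0\to m+n$ with $\sem{\hat d_i}$ nonempty, and bending back recovers $d_i$ by the snake equations, which hold in $\RelX{\field}$. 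For states $s,t\from 0\to\ell$ one checks directly that $(s\oplus t)\poi U_\ell$ has semantics $\sem{s}\cup\sem{t}$ \emph{provided both are nonempty}---the domain obstruction vanishes because states have no inputs. Iterating, the state obtained by folding $\hat d_1,\dots,\hat d_k$ with $\oplus$ and $U_{m+n}$ has semantics $\bigcup_i\sem{\hat d_i}$ (each intermediate union stays nonempty since every $P_i$ is), and bending it back to an $m\to n$ diagram gives a term of the free prop whose semantics is $P_1\cup\dots\cup P_k = R$. Therefore $R\in\mathsf{Im}(\sem{\cdot})$.

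I expect the only genuine obstacle to be the point flagged above: the $n$-fold union generator implements set-theoretic union only for relations of equal domain, so one must detour through states, carefully tracking nonemptiness so the iterated union is valid at each step and peeling off the genuinely empty case at the start. Everything else is a routine diagram chase plus the already-established expressivity of GPA; alternatively, the appeal to GPA's image characterization can be replaced by \cref{lem:pl-nf} together with \cref{thm:completeness}, which directly present every pl relation as $\sem{\bigcup_i d_i}$ with the $d_i$ in $\freeProp\Sigma_\geq^+$.
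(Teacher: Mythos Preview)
Your proof is correct and takes essentially the same approach as the paper: pass to one-sided diagrams via compact closure, then observe that the $n$-fold union generator computes set-theoretic union when plugged with two non-empty such diagrams (the paper works dually with effects $n\to 0$ rather than your states $0\to m+n$). The paper argues only the surjectivity direction and handwaves the non-emptiness caveat you spell out carefully, but the core idea is identical.
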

\begin{proof}
    If \circleeffect{C} and \circleeffect{D} are non-empty $n \to 0$ diagrams,
    \[
\tikzset{x=1em, y=2.1ex}
\InputIfFileExists{AUB.tikz}{}{\input{./tikz/AUB.tikz}}
\tikzset{x=1em, y=1.5ex}
 \;=\; 
\tikzset{x=1em, y=2.1ex}
\InputIfFileExists{AUB-left.tikz}{}{\input{./tikz/AUB-left.tikz}}
\tikzset{x=1em, y=1.5ex}
 \;\cup 
\tikzset{x=1em, y=2.1ex}
\InputIfFileExists{AUB-right.tikz}{}{\input{./tikz/AUB-right.tikz}}
\tikzset{x=1em, y=1.5ex}
 = \circleeffect{C}\, \cup \circleeffect{D}\]

    Since every pl relation can be written as
    a finite union of diagrams in $\freeProp\Sigma_\geq^+$,
    and we can easily avoid diagrams denoting the empty relation,
    this generates all of pl relations.
    \qed
\end{proof}

This means that we didn't formally need to introduce the notion of a SMLT after all:
we could have defined an equivalent SMIT by adding these generators.
However, this is for most purposes a much less convenient syntax,
and the corresponding equational theory would be more difficult to calculate with.
This is also the case for the examples that follow.

\subsection{The Simplest Non-Convex Diagram}

The following is one of the simplest diagrams that captures a non-convex relation:
\[
\tikzset{x=1em, y=2.1ex}
\begin{tikzpicture}
	\begin{pgfonlayer}{nodelayer}
		\node [style=none] (13) at (1.5, 0) {};
		\node [style=none] (17) at (-1.5, 0) {};
		\node [style=basic rounded box] (18) at (0, 0) {$+$};
	\end{pgfonlayer}
	\begin{pgfonlayer}{edgelayer}
		\draw (17.center) to (18);
		\draw (18) to (13.center);
	\end{pgfonlayer}
\end{tikzpicture}
}
\tikzset{x=1em, y=1.5ex}
 \;:=\; \Wcounit \,\; \Bunit \cup \Bcounit \,\; \Wunit\]
\noindent It is named after its semantics: the union of the $x$ and $y$ axes in the plane,
corresponding to the simple equation $x = 0 \lor y = 0$.
Despite its simplicity, it suffices to generate all of pl relations.

\begin{theorem}
    The image of the free prop generated by $\Sigma_\geq^+$ and $
\tikzset{x=1em, y=2.1ex}
}
\tikzset{x=1em, y=1.5ex}
$ is the prop of pl relations. 
\end{theorem}
\begin{proof}
    Define $dup : 1 \to 2$:\;
    $
\tikzset{x=1em, y=2.1ex}
\InputIfFileExists{dup.tikz}{}{\input{./tikz/dup.tikz}}
\tikzset{x=1em, y=1.5ex}
\; := \;
\tikzset{x=1em, y=2.1ex}
\InputIfFileExists{copy-abs.tikz}{}{\input{./tikz/copy-abs.tikz}}
\tikzset{x=1em, y=1.5ex}
$
\vspace{2pt}

\noindent This diagram has the interesting property of duplicating black and white units:
    \[
\tikzset{x=1em, y=2.1ex}
\InputIfFileExists{wunit-dup.tikz}{}{\input{./tikz/wunit-dup.tikz}}
\tikzset{x=1em, y=1.5ex}
 \; = \; 
\tikzset{x=1em, y=2.1ex}
\begin{tikzpicture}
	\begin{pgfonlayer}{nodelayer}
		\node [style=none] (14) at (0.5, 0.5) {};
		\node [style=white] (24) at (-1, 0.5) {};
		\node [style=none] (25) at (0.5, -0.5) {};
		\node [style=white] (26) at (-1, -0.5) {};
	\end{pgfonlayer}
	\begin{pgfonlayer}{edgelayer}
		\draw (14.center) to (24);
		\draw (25.center) to (26);
	\end{pgfonlayer}
\end{tikzpicture}
}
\tikzset{x=1em, y=1.5ex}
\qquad\quad\qquad 
\tikzset{x=1em, y=2.1ex}
\InputIfFileExists{bunit-dup.tikz}{}{\input{./tikz/bunit-dup.tikz}}
\tikzset{x=1em, y=1.5ex}
 \; = \; 
\tikzset{x=1em, y=2.1ex}
\begin{tikzpicture}
	\begin{pgfonlayer}{nodelayer}
		\node [style=none] (14) at (0.5, 0.5) {};
		\node [style=black] (24) at (-1, 0.5) {};
		\node [style=none] (25) at (0.5, -0.5) {};
		\node [style=black] (26) at (-1, -0.5) {};
	\end{pgfonlayer}
	\begin{pgfonlayer}{edgelayer}
		\draw (14.center) to (24);
		\draw (25.center) to (26);
	\end{pgfonlayer}
\end{tikzpicture}
}
\tikzset{x=1em, y=1.5ex}
\]

    \noindent We can chain it to build $
\tikzset{x=1em, y=2.1ex}
\InputIfFileExists{dup-n.tikz}{}{\input{./tikz/dup-n.tikz}}
\tikzset{x=1em, y=1.5ex}
 := 
\tikzset{x=1em, y=2.1ex}
\InputIfFileExists{dup-n-def.tikz}{}{\input{./tikz/dup-n-def.tikz}}
\tikzset{x=1em, y=1.5ex}
$ for any $n$.

    \noindent Then, let $
\tikzset{x=1em, y=2.1ex}
\InputIfFileExists{+n.tikz}{}{\input{./tikz/+n.tikz}}
\tikzset{x=1em, y=1.5ex}
 \;:=\; 
\tikzset{x=1em, y=2.1ex}
\InputIfFileExists{co-dupn-+-dupn.tikz}{}{\input{./tikz/co-dupn-+-dupn.tikz}}
\tikzset{x=1em, y=1.5ex}
\; = \;
\tikzset{x=1em, y=2.1ex}
\InputIfFileExists{wcounitn-bunitn.tikz}{}{\input{./tikz/wcounitn-bunitn.tikz}}
\tikzset{x=1em, y=1.5ex}
\,\cup 
\tikzset{x=1em, y=2.1ex}
\InputIfFileExists{bcounitn-wunitn.tikz}{}{\input{./tikz/bcounitn-wunitn.tikz}}
\tikzset{x=1em, y=1.5ex}
$

    \noindent This allows us to build:
    \begin{align*}
    
\tikzset{x=1em, y=2.1ex}
\InputIfFileExists{union-from-+.tikz}{}{\input{./tikz/union-from-+.tikz}}
\tikzset{x=1em, y=1.5ex}
 \;&=\;
\tikzset{x=1em, y=2.1ex}
\InputIfFileExists{union-from-+-left.tikz}{}{\input{./tikz/union-from-+-left.tikz}}
\tikzset{x=1em, y=1.5ex}
\,\cup 
\tikzset{x=1em, y=2.1ex}
\InputIfFileExists{union-from-+-right.tikz}{}{\input{./tikz/union-from-+-right.tikz}}
\tikzset{x=1em, y=1.5ex}
 \\&=\; 
\tikzset{x=1em, y=2.1ex}
\InputIfFileExists{union-left.tikz}{}{\input{./tikz/union-left.tikz}}
\tikzset{x=1em, y=1.5ex}
 \;\cup \;
\tikzset{x=1em, y=2.1ex}
\InputIfFileExists{union-right.tikz}{}{\input{./tikz/union-right.tikz}}
\tikzset{x=1em, y=1.5ex}
 \;= \;
\tikzset{x=1em, y=2.1ex}
\InputIfFileExists{union-n.tikz}{}{\input{./tikz/union-n.tikz}}
\tikzset{x=1em, y=1.5ex}

    \end{align*}\qed

\end{proof}

\subsection{The Semantics of a Diode}

\begin{minipage}{0.7\textwidth}
Most basic electrical circuit components can be modelled with an affine semantics.
The first exception is the (ideal) diode:
the idealised current-voltage semantics across a diode
is that the current can be negative and the voltage difference positive
but not both at the same time.
\end{minipage}
\begin{minipage}{0.3\textwidth}
\[
\tikzset{x=1em, y=2.1ex}
\begin{tikzpicture}[scale=0.6]
    \node[anchor=west] at (5,0) {$I$};
    \node[anchor=south] at (0,5) {$U$};
    \draw[->] (-5,0) -- (5,0);
    \draw[->] (0,-5) -- (0,5);
    \draw[very thick,color=blue] (0,0) -- (0,4.5);
    \draw[very thick,color=blue] (-4.5,0) -- (0,0);
\end{tikzpicture}
}
\tikzset{x=1em, y=1.5ex}
\]
\end{minipage}
On a graph, the allowed (current, voltage difference) pairs are depicted above.
Not only is this not affine, it is not even convex.
The corresponding diagram, $
\tikzset{x=1em, y=2.1ex}
\InputIfFileExists{leq-wcounit-wunit.tikz}{}{\input{./tikz/leq-wcounit-wunit.tikz}}
\tikzset{x=1em, y=1.5ex}
\, \cup 
\tikzset{x=1em, y=2.1ex}
\InputIfFileExists{wcounit-wunit-leq.tikz}{}{\input{./tikz/wcounit-wunit-leq.tikz}}
\tikzset{x=1em, y=1.5ex}
$,
is outside of both affine and polyhedral algebra.

We will see how to model electrical circuits with diodes in more detail
in the next section.
We will focus here on the following fact:
adding a generator with this semantics is once again enough
to recover all pl relations.
In fact we can even build the $\geq$ relation from the diode,
so we can start from affine algebra (without requiring the generality of polyhedral algebra).

For convenience, we define a new generator whose semantics is the mirror image of the diode's graph: 
\[\smallubox{\mathsf{L}}{}{}\;:=\;
\tikzset{x=1em, y=2.1ex}
\InputIfFileExists{geq-wcounit-wunit.tikz}{}{\input{./tikz/geq-wcounit-wunit.tikz}}
\tikzset{x=1em, y=1.5ex}
\, \cup 
\tikzset{x=1em, y=2.1ex}
\InputIfFileExists{wcounit-wunit-leq.tikz}{}{\input{./tikz/wcounit-wunit-leq.tikz}}
\tikzset{x=1em, y=1.5ex}
\]
Recall that $\Sigma^+$ is $\Sigma_\geq^+$ without $
\tikzset{x=1em, y=2.1ex}
}
\tikzset{x=1em, y=1.5ex}
$.
\begin{theorem}\label{thm:diode-is-enough}
    The image of the free prop generated by $\Sigma^+$ and \textsf{L} is the prop of pl relations. 
\end{theorem}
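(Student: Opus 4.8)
The plan is to prove both inclusions between this image and the prop of pl relations. One direction is routine: each generator of $\Sigma^+$ denotes a polyhedral relation, $\mathsf{L}$ denotes a union of two half-spaces (hence a pl relation), and the pl relations form a sub-prop of $\RelX{\field}$, being closed under the two compositions since $(\bigcup_i P_i)\poi(\bigcup_j Q_j) = \bigcup_{i,j}P_i\poi Q_j$ and likewise for $\oplus$; so every morphism in the image is pl. For the converse I would reduce to the earlier theorem showing that $\Sigma_\geq^+$ together with $\tikzfig{+}$ already generates all pl relations: it then suffices to exhibit both $\tikzfig{generators/geq}$ and $\tikzfig{+}$ in the image of the free prop on $\Sigma^+\cup\{\mathsf{L}\}$.

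First I would recover the order generator. Up to composing with $\scalar{-1}$ we may take $\sem{\mathsf{L}} = \{(x,y)\mid x\geq 0\lor y\geq 0\}$. Pinning the second port of $\mathsf{L}$ to the constant $-1$ (built from $\One$ and $\scalar{-1}$) makes the disjunct $y\geq 0$ unsatisfiable, so the resulting $1\to 0$ effect is exactly $\{x\mid x\geq 0\}$. Composing this effect with $\Wcomult$ on one of its outputs then yields $\tikzfig{generators/geq}$: from $w = x+y$ with $y\geq 0$ one reads off $w\geq x$. Hence $\Sigma^+\cup\{\mathsf{L}\}$ generates everything $\Sigma_\geq^+$ does, in particular all of $\IHpoly$ and all polyhedral relations.

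Building $\tikzfig{+}$ is where the only real difficulty lies: we must manufacture a genuinely non-convex relation --- a \emph{union} --- from a syntax whose $\cup$-prop layer provides no union at all. The idea is that the union is already hidden inside $\mathsf{L}$ and can be extracted by an \emph{intersection} of sign-flipped copies. Write $\mathsf{L}^{\sigma\tau}$, for $\sigma,\tau\in\{+,-\}$, for $\mathsf{L}$ with a $\scalar{-1}$ inserted on its input (resp. output) exactly when $\sigma$ (resp. $\tau$) is $-$, so that $\sem{\mathsf{L}^{\sigma\tau}} = \{(x,y)\mid \sigma x\geq 0 \lor \tau y\geq 0\}$. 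A one-line case analysis --- $(x\geq 0\lor y\geq 0)\land(x\geq 0\lor y\leq 0)$ is equivalent to $x\geq 0\lor y = 0$, and dually with $x\leq 0$ --- gives
\[ \sem{\mathsf{L}^{++}}\cap\sem{\mathsf{L}^{+-}}\cap\sem{\mathsf{L}^{-+}}\cap\sem{\mathsf{L}^{--}} \;=\; \{(x,y)\mid x = 0\lor y = 0\} \;=\; \sem{\tikzfig{+}}\text{.} \]
The left-hand intersection is realised in $\Sigma^+\cup\{\mathsf{L}\}$ by duplicating the two input wires with $\Bcomult$ and feeding each (suitably negated) copy into its own instance of $\mathsf{L}$; crucially, no syntactic union is invoked, since each instance of $\mathsf{L}$ supplies its own. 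With $\tikzfig{generators/geq}$ and $\tikzfig{+}$ both in the image, the statement follows from the earlier theorem on $\tikzfig{+}$.

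The main obstacle is precisely this last step: recognising that a union can be synthesised as an \emph{intersection} of ``union-of-half-space'' generators, so that the construction stays within the union-free fragment. Everything else --- the two inclusions, pinning a port of $\mathsf{L}$ to a constant, and the diagrammatic plumbing (copying wires, inserting $\scalar{-1}$) --- is straightforward bookkeeping.
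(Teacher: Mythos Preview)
Your argument rests on a misreading of $\mathsf{L}$. You treat its semantics as the union of two half-\emph{planes}, $\{(x,y)\mid x\geq 0\lor y\geq 0\}$, but the paper defines it as the union of two \emph{rays}: $\sem{\mathsf{L}}=\{(x,0)\mid x\geq 0\}\cup\{(0,y)\mid y\geq 0\}$. In each disjunct of the defining diagram one port carries a $\Wcounit$ or $\Wunit$, pinning that coordinate to $0$; this is the ``mirror image of the diode's graph'', a one-dimensional L-shape, not a three-quarter plane. With the correct semantics both of your constructions collapse. Pinning the right port of $\mathsf{L}$ to $-1$ yields the \emph{empty} relation, since neither branch admits $y=-1$; you do not isolate the $x\geq 0$ disjunct. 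And the intersection of your four sign-flipped copies $\mathsf{L}^{\sigma\tau}$ is only the origin: each $\mathsf{L}^{\sigma\tau}$ already lies inside the cross $\{x=0\}\cup\{y=0\}$, and on the $y$-axis the copies $\mathsf{L}^{++}$ and $\mathsf{L}^{+-}$ force $y\geq 0$ and $y\leq 0$ respectively (and symmetrically on the $x$-axis).

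The paper follows the same outline you propose---recover $\geq$ first, then build $+$ and invoke the earlier theorem---but with constructions adapted to the one-dimensional $\mathsf{L}$. For $\geq$ one sandwiches $\mathsf{L}$ between $\Wcomult$ and $\Wcounit$: writing the input as $a=b+x$ and forcing the $\mathsf{L}$-output to $0$, the first branch gives $x\geq 0$ (so $a\geq b$) and the second gives $x=0$ (so $a=b$); their union is $\geq$. For $+$ the paper does \emph{not} intersect sign-flipped copies but combines them additively, in two stages. First $\Bcomult;(\mathsf{L}\oplus(\mathsf{L};\scalar{-1}));\Wmult$ (copy the input, sum the outputs) yields the T-shape $\{(x,0)\mid x\geq 0\}\cup\{(0,y)\mid y\in\field\}$: the cross-term where both copies sit on the $y$-ray lets two opposite rays add up to an arbitrary $y$. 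A second pass, now with $\Wcomult$/$\Bmult$ and an input-side sign flip, combines this T with its mirror image to produce the full cross $+$. The moral is that with rays rather than half-planes, the way to ``spread'' a disjunct into a full line is additive (via the bimonoid structure), not by intersection.
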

\begin{proof}
    First, we can construct the $\geq$ generator from \textsf{L}:
    \[
\tikzset{x=1em, y=2.1ex}
\InputIfFileExists{wcomult-L-wcounit.tikz}{}{\input{./tikz/wcomult-L-wcounit.tikz}}
\tikzset{x=1em, y=1.5ex}
\,=\, 
\tikzset{x=1em, y=2.1ex}
\InputIfFileExists{wcomult-geq-wcounit-wbone.tikz}{}{\input{./tikz/wcomult-geq-wcounit-wbone.tikz}}
\tikzset{x=1em, y=1.5ex}
\cup 
\tikzset{x=1em, y=2.1ex}
\InputIfFileExists{wcomult-wcounit-wunit-geq-wcounit.tikz}{}{\input{./tikz/wcomult-wcounit-wunit-geq-wcounit.tikz}}
\tikzset{x=1em, y=1.5ex}
 = \;
\tikzset{x=1em, y=2.1ex}
}
\tikzset{x=1em, y=1.5ex}
 \cup \idone = 
\tikzset{x=1em, y=2.1ex}
}
\tikzset{x=1em, y=1.5ex}
\]

    So we generate all polyhedral relations.
    Then we can also recover the $+$ generator from the previous section,
    which is enough to generate all of pl relations:
    \begin{align*}
    
\tikzset{x=1em, y=2.1ex}
\InputIfFileExists{bw-conv-L-minus-L.tikz}{}{\input{./tikz/bw-conv-L-minus-L.tikz}}
\tikzset{x=1em, y=1.5ex}
 &= 
\tikzset{x=1em, y=2.1ex}
\InputIfFileExists{L-Lminus-1.tikz}{}{\input{./tikz/L-Lminus-1.tikz}}
\tikzset{x=1em, y=1.5ex}
\cup
\tikzset{x=1em, y=2.1ex}
\InputIfFileExists{L-Lminus-2.tikz}{}{\input{./tikz/L-Lminus-2.tikz}}
\tikzset{x=1em, y=1.5ex}
\\
    &\quad\cup
\tikzset{x=1em, y=2.1ex}
\InputIfFileExists{L-Lminus-3.tikz}{}{\input{./tikz/L-Lminus-3.tikz}}
\tikzset{x=1em, y=1.5ex}
\cup
\tikzset{x=1em, y=2.1ex}
\InputIfFileExists{L-Lminus-4.tikz}{}{\input{./tikz/L-Lminus-4.tikz}}
\tikzset{x=1em, y=1.5ex}
\\
    &= 
\tikzset{x=1em, y=2.1ex}
\InputIfFileExists{geq-wcounit-wunit.tikz}{}{\input{./tikz/geq-wcounit-wunit.tikz}}
\tikzset{x=1em, y=1.5ex}
\cup 
\tikzset{x=1em, y=2.1ex}
\InputIfFileExists{wcounit-wunit-geq.tikz}{}{\input{./tikz/wcounit-wunit-geq.tikz}}
\tikzset{x=1em, y=1.5ex}
\cup 
\tikzset{x=1em, y=2.1ex}
\InputIfFileExists{wcounit-wunit-leq.tikz}{}{\input{./tikz/wcounit-wunit-leq.tikz}}
\tikzset{x=1em, y=1.5ex}
\cup 
\tikzset{x=1em, y=2.1ex}
\begin{tikzpicture}
	\begin{pgfonlayer}{nodelayer}
		\node [style=black] (14) at (-0.75, 0) {};
		\node [style=none] (27) at (0.5, 0) {};
		\node [style=white] (29) at (-1.75, 0) {};
		\node [style=none] (31) at (-3, 0) {};
	\end{pgfonlayer}
	\begin{pgfonlayer}{edgelayer}
		\draw (14) to (27.center);
		\draw (29) to (31.center);
	\end{pgfonlayer}
\end{tikzpicture}
}
\tikzset{x=1em, y=1.5ex}
\\
    &= 
\tikzset{x=1em, y=2.1ex}
\InputIfFileExists{geq-wcounit-wunit.tikz}{}{\input{./tikz/geq-wcounit-wunit.tikz}}
\tikzset{x=1em, y=1.5ex}
\cup 
\tikzset{x=1em, y=2.1ex}
}
\tikzset{x=1em, y=1.5ex}
 =: \smallubox{\vdash}
    \end{align*}
    \begin{align*}
    
\tikzset{x=1em, y=2.1ex}
\InputIfFileExists{wb-conv-vdash-minusvdash.tikz}{}{\input{./tikz/wb-conv-vdash-minusvdash.tikz}}
\tikzset{x=1em, y=1.5ex}
 &= 
\tikzset{x=1em, y=2.1ex}
\InputIfFileExists{vdash-minusvdash-1.tikz}{}{\input{./tikz/vdash-minusvdash-1.tikz}}
\tikzset{x=1em, y=1.5ex}
 \cup 
\tikzset{x=1em, y=2.1ex}
\InputIfFileExists{vdash-minusvdash-2.tikz}{}{\input{./tikz/vdash-minusvdash-2.tikz}}
\tikzset{x=1em, y=1.5ex}
\\
    &\quad \cup 
\tikzset{x=1em, y=2.1ex}
\InputIfFileExists{vdash-minusvdash-3.tikz}{}{\input{./tikz/vdash-minusvdash-3.tikz}}
\tikzset{x=1em, y=1.5ex}
	\cup 
\tikzset{x=1em, y=2.1ex}
\InputIfFileExists{vdash-minusvdash-4.tikz}{}{\input{./tikz/vdash-minusvdash-4.tikz}}
\tikzset{x=1em, y=1.5ex}
\\
    &= 
\tikzset{x=1em, y=2.1ex}
\begin{tikzpicture}
	\begin{pgfonlayer}{nodelayer}
		\node [style=white] (14) at (-0.75, 0) {};
		\node [style=none] (27) at (0.5, 0) {};
		\node [style=black] (29) at (-1.75, 0) {};
		\node [style=none] (31) at (-3, 0) {};
	\end{pgfonlayer}
	\begin{pgfonlayer}{edgelayer}
		\draw (14) to (27.center);
		\draw (29) to (31.center);
	\end{pgfonlayer}
\end{tikzpicture}
}
\tikzset{x=1em, y=1.5ex}
\cup 
\tikzset{x=1em, y=2.1ex}
\InputIfFileExists{geq-wcounit-wunit.tikz}{}{\input{./tikz/geq-wcounit-wunit.tikz}}
\tikzset{x=1em, y=1.5ex}
 \cup 
\tikzset{x=1em, y=2.1ex}
\InputIfFileExists{leq-wcounit-wunit.tikz}{}{\input{./tikz/leq-wcounit-wunit.tikz}}
\tikzset{x=1em, y=1.5ex}
\cup 
\tikzset{x=1em, y=2.1ex}
}
\tikzset{x=1em, y=1.5ex}
\\
    &= 
\tikzset{x=1em, y=2.1ex}
}
\tikzset{x=1em, y=1.5ex}
\cup 
\tikzset{x=1em, y=2.1ex}
}
\tikzset{x=1em, y=1.5ex}
 = 
\tikzset{x=1em, y=2.1ex}
}
\tikzset{x=1em, y=1.5ex}

    \end{align*}
    \qed
\end{proof}

\subsection{$max$, \emph{ReLu} and $abs$}

Three of the most basic piecewise-linear functions one might come across are
$abs$, $max$ and $ReLu$.
We define them diagrammatically as follows:
\[
\tikzset{x=1em, y=2.1ex}
\InputIfFileExists{max.tikz}{}{\input{./tikz/max.tikz}}
\tikzset{x=1em, y=1.5ex}
:=
\tikzset{x=1em, y=2.1ex}
\InputIfFileExists{max-def-left.tikz}{}{\input{./tikz/max-def-left.tikz}}
\tikzset{x=1em, y=1.5ex}
\;\cup
\tikzset{x=1em, y=2.1ex}
\InputIfFileExists{max-def-right.tikz}{}{\input{./tikz/max-def-right.tikz}}
\tikzset{x=1em, y=1.5ex}
 \quad \scalar{abs} := 
\tikzset{x=1em, y=2.1ex}
\InputIfFileExists{abs-def.tikz}{}{\input{./tikz/abs-def.tikz}}
\tikzset{x=1em, y=1.5ex}
\]
\[\scalar{ReLu} := 
\tikzset{x=1em, y=2.1ex}
\InputIfFileExists{relu-def.tikz}{}{\input{./tikz/relu-def.tikz}}
\tikzset{x=1em, y=1.5ex}
\]

While the reader will certainly be familiar with the first two, \emph{ReLu} has acquired significant fame as one of the basic building blocks of neural networks. In fact, all neural networks whose activation function is \emph{ReLu} can be represented in GPLA. This opens up the exciting possibility of applying equational reasoning to neural networks, a possibility that we leave for future work. 

Once again, adding either of them to the syntax for affine algebra suffices to construct any pl relation.

\begin{theorem}
The image of the free prop generated by $\Sigma^+$ and any of $max$, $abs$ or $ReLu$ is the prop of pl relations.
\end{theorem}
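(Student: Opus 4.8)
The plan is to reduce the statement to a single one of the three generators and then invoke \cref{thm:diode-is-enough}. One inclusion is immediate: every generator of $\Sigma^+$ denotes an affine (hence polyhedral, hence pl) relation, and $abs$, $max$, $ReLu$ are each defined as finite unions of polyhedral diagrams, so they denote pl relations; since the pl relations form a sub-prop of $\RelX{\field}$, closed under composition and monoidal product, the whole image lies in it. For the converse I would first observe that $abs$, $max$ and $ReLu$ are interdefinable over $\Sigma^+$ using only white nodes (addition), black nodes (copying), scalars and the constant $0$: as relations, $abs(x) = max(x,-x)$, $ReLu(x) = \frac{1}{2}(x + abs(x))$, $abs(x) = 2\,ReLu(x) - x$, $max(x,y) = y + ReLu(x-y)$ and $max(x,y) = \frac{1}{2}\big((x+y) + abs(x-y)\big)$; the scalars $\frac{1}{2},-1,2$ are available since an ordered field has characteristic $0$. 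Hence the three props named in the statement coincide, and it suffices to treat $\Sigma^+$ together with, say, $abs$.

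Next I would realise the generator $\mathsf{L}$ of \cref{thm:diode-is-enough} inside the prop generated by $\Sigma^+$ and $abs$. Using the black compact structure (cups and caps, which are built from black generators of $\Sigma^+$), form the opposite relation $abs^{op}$, whose semantics is the non-convex "V" $\{(a,b)\mid a=|b|\}$, i.e. the union of the rays $\{(t,t)\mid t\ge 0\}$ and $\{(t,-t)\mid t\ge 0\}$. Regarding this relation as a state on two wires (again via the compact structure), post-composing with the invertible linear map $(a,b)\mapsto(a+b,\,a-b)$ --- a small diagram of white nodes and the scalar $-1$ --- and bending the wires back, one obtains the relation $\{(s,0)\mid s\ge 0\}\cup\{(0,t)\mid t\ge 0\}$, which is exactly $\sem{\mathsf{L}}$. (Alternatively one could target the $+$ generator of the previous subsection, starting from the "X"-shaped relation $\sem{abs}\,;\sem{abs^{op}}=\{(a,b)\mid |a|=|b|\}$ and applying the same coordinate change.) Thus $\mathsf{L}$ lies in the image of the free prop on $\Sigma^+\cup\{abs\}$.

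Finally, by \cref{thm:diode-is-enough} every pl relation is the denotation of some diagram over $\Sigma^+\cup\{\mathsf{L}\}$; substituting for each occurrence of $\mathsf{L}$ the $(\Sigma^+\cup\{abs\})$-diagram constructed above shows that every pl relation is already a denotation over $\Sigma^+\cup\{abs\}$. Combined with the easy inclusion and the interdefinability step, this gives that all three props equal the prop of pl relations. I expect the only delicate point to be the middle step: verifying that the chosen linear change of coordinates lands on precisely the relation $\mathsf{L}$ used in \cref{thm:diode-is-enough} (the diode, $\mathsf{L}$ and their reflections differ only by affine automorphisms, so a wrong sign is harmless up to one more scalar $-1$, but it should be matched carefully). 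The interdefinability identities and the closure-under-composition argument are routine.
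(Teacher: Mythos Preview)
Your proof is correct and follows essentially the same strategy as the paper: show the three generators are interdefinable over $\Sigma^+$, then construct $\mathsf{L}$ from one of them and invoke \cref{thm:diode-is-enough}. The only difference is cosmetic---the paper picks $max$ and builds $\mathsf{L}$ via a short direct diagram, whereas you pick $abs$ and obtain $\mathsf{L}$ by a linear change of coordinates applied to (the state form of) $abs^{op}$; both routes work equally well.
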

\begin{proof}
    First, we notice that the three functions are inter-definable.
    $abs$ and $ReLu$ were already defined in terms of $max$,
    and we can complete the cycle:
    \[
        max(x,y) = x + max(0, y-x) = x + ReLu(y-x)
    \]
    \[
        ReLu(x) = max(0, x) = (x + abs(x))/2
    \]
    So we only need to show the result for one of them.
    Let's pick $max$.
    We recover \textsf{L}, which we know suffices by~\cref{thm:diode-is-enough}.
    First $
\tikzset{x=1em, y=2.1ex}
\InputIfFileExists{max-wcounit.tikz}{}{\input{./tikz/max-wcounit.tikz}}
\tikzset{x=1em, y=1.5ex}
 \;=\; 
\tikzset{x=1em, y=2.1ex}
\InputIfFileExists{leq-wcounit+wcounit.tikz}{}{\input{./tikz/leq-wcounit+wcounit.tikz}}
\tikzset{x=1em, y=1.5ex}
\;\; \cup 
\tikzset{x=1em, y=2.1ex}
\InputIfFileExists{wcounit+leq-wcounit.tikz}{}{\input{./tikz/wcounit+leq-wcounit.tikz}}
\tikzset{x=1em, y=1.5ex}
$.

    \noindent Thus $
\tikzset{x=1em, y=2.1ex}
\InputIfFileExists{L-from-max.tikz}{}{\input{./tikz/L-from-max.tikz}}
\tikzset{x=1em, y=1.5ex}
=\; 
\tikzset{x=1em, y=2.1ex}
\InputIfFileExists{geq-wcounit-wunit.tikz}{}{\input{./tikz/geq-wcounit-wunit.tikz}}
\tikzset{x=1em, y=1.5ex}
\cup 
\tikzset{x=1em, y=2.1ex}
\InputIfFileExists{wcounit-wunit-leq.tikz}{}{\input{./tikz/wcounit-wunit-leq.tikz}}
\tikzset{x=1em, y=1.5ex}
 \;=\;\smallubox{\mathsf{L}}$
    \qed
\end{proof}

\begin{remark}
    It is standard that $max$ together with linear maps
    generates all continuous pl \emph{functions}.
    Our result can be seen as a generalization of this fact
    to the relational setting.
\end{remark}


\subsection{Conclusion}

These examples justify the generality of pl relations:
they constitute the minimal extension of polyhedral algebra (and in some cases affine algebra)
that can express any of the very useful relations above. 
This is interesting because pl relations form a nearly universal domain:
they can approximate any smooth manifold over a bounded domain.

Despite our compelling examples,
there could still be interesting props between polyhedral and pl relations.
In particular, determining the prop generated by $\Sigma_\geq^+$
together with $\Wunit \cup \One$
is currently an open problem.

\section{Case Study: Electronic Circuits}\label{sec:electronic}

To illustrate how one would use this theory in a concrete case,
we turn to the study of electronic circuits.
We build on the work done in \cite{boisseauStringDiagrammaticElectrical2021}.
The syntax mimics the usual circuits drawn by electrical engineers,
by generating a free two-colored prop from basic elements and wires.
The blue wires are electrical wires,
and the black wires carry information;
for details see \cite{boisseauStringDiagrammaticElectrical2021}.
\[
    \eleccopy
    ~|~ \eleccocopy
    ~|~ \eleccounit
    ~|~ \elecunit
    ~|~ \tikz[color=blue]{\draw (0,0) to[R, l={\small $R$}] (3,0);}
    ~|~ \tikz[color=blue]{\draw (0,0) to[diode] (3,0);}
    ~|~ \tikz[color=blue]{\draw (0,0) to[vsource, name=V] (3,0); \draw[color=black] (0,1.5) to[in=90, out=0] (V.north);}
    ~|~ \tikz[color=blue]{\draw (0,0) to[I, name=I] (3,0); \draw[color=black] (0,1.5) to[in=90, out=0] (I.north);}
    ~|~ \tikz[color=blue]{\draw (0,0) to[ammeter, name=A] (3,0); \draw[color=black] (A.north) to[in=180, out=90] (3,1.5);}
    ~|~ \tikz[color=blue]{\draw (0,0) to[vmeter, name=V] (3,0); \draw[color=black] (V.north) to[in=180, out=90] (3,1.5);}
\]

The corresponding physical model imposes constraints
between two quantities: current and voltage.
To express this, we map an electrical wire into two GPLA wires,
the top one for voltage and the bottom one for current.
We then give to each generator a semantics in GPLA
that expresses the relevant physical equations.
For example:

\begin{align*}
    \sem{\tikz[color=blue]{\draw (0,0) to[R, l={\small $R$}] (3,0);}}
    &:= \includegraffle{resistorgla},
    &\sem{\eleccopy}
    &:= \includegraffle{junctiongla},
    &\sem{\tikz[color=blue]{\draw (0,0) to[vsource, name=V] (3,0); \draw[color=black] (0,1.5) to[in=90, out=0] (V.north);}}
    &:= \includegraffle{cvsourceGLA}
\end{align*}

The core of this approach is the fact that composition
of constraints in GPLA gives the behaviour of the corresponding composite electrical circuit.
We can thus define the semantics of a whole circuit
compositionally, and get the physically expected result.

So far this follows exactly~\cite{boisseauStringDiagrammaticElectrical2021}.
Our contribution is the ability to express the behaviour of diodes:
\begin{align*}
    \sem{\tikz[color=blue]{\draw (0,0) to[diode] (3,0);}}
    &:= \; 
\tikzset{x=1em, y=2.1ex}
\InputIfFileExists{diode-def-left.tikz}{}{\input{./tikz/diode-def-left.tikz}}
\tikzset{x=1em, y=1.5ex}
\cup
\tikzset{x=1em, y=2.1ex}
\InputIfFileExists{diode-def-right.tikz}{}{\input{./tikz/diode-def-right.tikz}}
\tikzset{x=1em, y=1.5ex}
\\
    &= \;
\tikzset{x=1em, y=2.1ex}
\InputIfFileExists{diode-lagrange-union.tikz}{}{\input{./tikz/diode-lagrange-union.tikz}}
\tikzset{x=1em, y=1.5ex}

    = 
\tikzset{x=1em, y=2.1ex}
\InputIfFileExists{diode-lagrange.tikz}{}{\input{./tikz/diode-lagrange.tikz}}
\tikzset{x=1em, y=1.5ex}

\end{align*}

\begin{remark}
    We cannot include capacitors and inductors,
    because they require semantics in $\IHaff_{\R(x)}$,
    and $\R(x)$ cannot be ordered in a way that would be consistent with the physics.
    Finding diagrammatic semantics that can accommodate both capacitors and diodes is an important open problem.
\end{remark}

This extension allows us to model electronic circuits!
As hinted in the previous section,
diodes by themselves can be used to build many things.
For example, we can model a simple idealized transistor as follows:
\cite[Fig. 59.1]{textbookOfElectricalTechnology}

\[
\begin{tikzpicture}[color=blue]
    \node[pnp, xscale=-1] (T) at (0,0) {};
    \draw (-2,1.7) -- (T.E);
    \draw (-2,-1.7) -- (T.C);
    \draw (T.B) -- (2,0);
\end{tikzpicture}
:=
\begin{tikzpicture}[color=blue]
    \draw (0,1.7) to[ammeter, name=A] (3,1.7);
    \draw (3,1.7) to[diode] (5,1.7);
    \draw (0,-1.7) to[isource, invert, name=I] (5,-1.7);
    \draw[color=black] (A.south) to[in=105, out=-75] (I.north);
    \draw (5,1.7) -- (5,-1.7);
    \node[elecdot] at (5,0) {};
    \draw (5,0) -- (7,0);
\end{tikzpicture}
\]

That said,
it is impractical to prove the equality of two non-trivial electronic circuits explicitly
as the number of alternatives grows exponentially in the number of diodes.
Like in standard mathematical practice,
making this practical will require finding appropriate techniques
and approximations,
which we leave for future work.

\section*{Acknowledgements}
The authors would like to thank the various Twitter and Zulip users
who contributed to the genesis and development of
the theory contained in this paper,
notably Jules Hedges, Cole Comfort and Reid Barton.
Reid Barton in particular contributed significantly to the proof of completeness.

The first author is funded by the \textsc{EPSRC}
under grant~OUCS/GB/1034913.
The second author acknowledges support from \textsc{EPSRC} grant~EP/V002376/1.

\newpage
\appendix

\section{Appendix}

\subsection{Linear, Affine and Polyhedral Relations}

From subsets of the generators in~\eqref{eq:syntax}, we can define four syntactic props and characterise the image of the corresponding semantic functor, in order of increasing expressiveness below, along with their respective axiomatisations.
\begin{description}
\item[Matrices.] \hfill
\begin{itemize}
\item For $\Sigma^{\frac{1}{2}} = \left\{\Bcomult, \Bcounit, \Wmult, \Wunit\right\}$, let $\sem{\cdot}\from \mathsf{P}\Sigma^{\frac{1}{2}} \to \RelX{\field}$ be the prop morphism given on generators by
\begin{equation}\label{eq:mat-semantics}
\begin{gathered} 
\sem{\Bcomult}\df \left\{\left(x,\begin{pmatrix} x\\ x \end{pmatrix}\right) \mid x \in \field \right\} \qquad\quad \sem{\Bcounit}\df \{(x,\bullet) \mid x \in \field \} \\
\sem{\Wmult}\df \left\{\left(\begin{pmatrix} x\\ {y} \end{pmatrix},{x+y}\right) \mid x,{y} \in \field \right\}
\qquad
\sem{\Wunit}\df \{(\bullet, {0})  \}\qquad\qquad\qquad\quad\\
 \sem{\scalar{k}}\df \{(x, k \cdot x) \mid x \in \field \}\quad \text{for } k\in\field.
\end{gathered}
\end{equation}
\item The image of $\mathsf{P}\Sigma^{\frac{1}{2}}$ by $\sem{\cdot}$ is (isomorphic to) the prop whose arrows $n\to m$ are $m\times n$-\emph{matrices} with coefficients in $\field$ (see e.g.~\cite[Section 3]{interactinghopf}). We call this prop $\Mat{\field}$.

\item The theory of bimonoids, with scalars $\scalar{k}$ that distribute over the bimonoid generators axiomatises the prop of matrices (see e.g.~\cite[Proposition 3.7]{interactinghopf}). It can be found in the first block of of \cref{fig:ih-pl}.
\end{itemize} 
\item[Linear relations.]\hfill
\begin{itemize}
\item For $\Sigma = \Sigma^{\frac{1}{2}} \cup \left\{\Bmult, \Bunit, \Wcomult, \Wcounit\right\}$, extend\footnote{We will overload the semantic functor $\sem{\cdot}$ as it poses no problem to do so.} $\sem{\cdot}\from \mathsf{P}\Sigma \to \RelX{\field}$ to the prop morphism defined by~\eqref{eq:mat-semantics} and
\begin{equation}\label{eq:linrel-semantics}
\begin{gathered} 
\sem{\Bmult}\df \left\{\left(\begin{pmatrix} x\\ x \end{pmatrix},x\right) \mid x \in \field \right\} \qquad\quad \sem{\Bunit}\df \{(\bullet,x) \mid x \in \field \}\\
\sem{\Wcomult}\df \left\{\left(x+y,\begin{pmatrix} x\\ {y} \end{pmatrix}\right) \mid x,{y} \in \field \right\}
\qquad\quad
\sem{\Wcounit}\df \{(0,\bullet)\}\qquad\qquad\qquad\quad
\end{gathered}
\end{equation}
\item The image of $\mathsf{P}{\Sigma}$ by $\sem{\cdot}$ is the prop whose arrows $n\to m$ are \emph{linear subspaces} of $\field^n\times\field^m$. We call this prop $\LinRel{\field}$.

\item $\IH$, the theory of \emph{Interacting Hopf algebras} axiomatises linear relations~\cite[Theorem 6.4]{interactinghopf}. It can be found in the second block of \cref{fig:ih-pl}.
\end{itemize}
\item[Affine relations.] \hfill
\begin{itemize}
\item For $\Sigma^+ = \Sigma\cup \{\One\}$ extend $\sem{\cdot}\from \mathsf{P}\Sigma^+ \to \RelX{\field}$ to the prop morphism defined by~\eqref{eq:mat-semantics}-\eqref{eq:linrel-semantics} and
\begin{equation}\label{eq:one-semantics}
\sem{\One} \df \{(\bullet, 1)\}
\end{equation}
\item The image of $\mathsf{P}{\Sigma^+}$ by $\sem{\cdot}$ is the prop whose arrows $n\to m$ are \emph{affine subspaces} of $\field^n\times\field^m$. We call this prop $\AffRel{\field}$.

\item $\IH^+$ adds three simple equations to $\IH$ to obtain an axiomatisation of affine relations~\cite[Theorem 17]{bonchiGraphicalAffineAlgebra2019}. They are the first three axioms of the fourth block of \cref{fig:ih-pl} (the last axiom of this block axiomatises the behaviour of the constant $\One$ in the presence of an order; see immediately below).
\end{itemize}
\item[Polyhedral relations.] We now assume that $\field$ is an ordered field, that is, a field equipped with a total order $\geq$ compatible with the field operations in the following sense: for all $x,y,z\in\field$, \emph{i)} if $x\geq y$ then $x+z\geq y+z$, and \emph{ii)} if $x\geq 0$  and $y\geq 0$ then $xy\geq 0$.
\begin{itemize}
\item For $\Sigma^+_\geq = \Sigma^+\cup \{\scalar{\geq}\}$ extend $\sem{\cdot}\from \mathsf{P}\Sigma^+_{\geq}  \to \RelX{\field}$ to the prop morphism defined by~\eqref{eq:mat-semantics}-\eqref{eq:one-semantics} and
\begin{equation}\label{eq:geq-semantics}
\sem{\scalar{\geq}} \df \{(x,y)\in \field\times \field\mid x \geq y\}
\end{equation}
\item The image of $\mathsf{P}{\Sigma^+_\geq}$ by $\sem{\cdot}$ is the prop whose arrows $n\to m$ are (finitely generated) \emph{polyhedra} of $\field^n\times\field^m$, i.e., subsets of the form 
\[\left\{(x,y)\in \field^n\times\field^m \mid A\begin{pmatrix}x\\ y\end{pmatrix} + b \geq 0 \right\}\] 
for some matrix $A$ and some vector $b$. 
We call this prop $\PolyRel{\field}$ (see \cite{bonchiDiagrammaticPolyhedralAlgebra2021} for more details, in particular Appendix A for the proof that these form a prop).

\item $\IHpoly$ provides an axiomatisation of polyhedral relations~\cite[Corollary 25]{bonchiDiagrammaticPolyhedralAlgebra2021}. It can be found in the first four blocks of \cref{fig:ih-pl}.
\end{itemize}
\end{description}

\subsection{Encoding matrices}

It is helpful to illustrate how matrices are represented as diagrams. An $m\times n$ matrix corresponds to a string diagram $d_A$ with $n$ wires on the left and $m$ wires on the right. The left $j$th port is connected to the $i$th port on the right through an $r$-scalar whenever $A_{ij} = r$. In particular, when $A_{ij}$ is $0$, the two wires are disconnected. For example,
\[A = \begin{pmatrix}r & 0 & 0\\s & 1 & 0\\1 & 0 & 0\\ 0 & 0 & 0\end{pmatrix} \quad \text{ is depicted as } \quad d_A = \;
\tikzset{x=1em, y=2.1ex}
\InputIfFileExists{ex-matrix.tikz}{}{\input{./tikz/ex-matrix.tikz}}
\tikzset{x=1em, y=1.5ex}
\]
One can check that $\sem{d_A} = \{(x,y)\mid y = Ax\}$.

\bibliography{main}

\end{document}